\title{Computable dyadic subbases and $\mathbf{T}^\omega$-representations of compact sets}
\author{
Arno Pauly
\institute{Department of Computer Science, Swansea University, Swansea, UK\\
Birmingham University, United Kingdom}
\email{Arno.M.Pauly@gmail.com}
\and
Hideki Tsuiki
\institute{Graduate School of Human and Environmental Studies,\\ Kyoto University, Kyoto, Japan}
\email{tsuiki@i.h.kyoto-u.ac.jp}
}
\begin{document}
\theoremstyle{definition}
\newtheorem{theorem}{Theorem}
\newtheorem{definition}[theorem]{Definition}
\newtheorem{problem}[theorem]{Problem}
\newtheorem{assumption}[theorem]{Assumption}
\newtheorem{corollary}[theorem]{Corollary}
\newtheorem{proposition}[theorem]{Proposition}
\newtheorem{lemma}[theorem]{Lemma}
\newtheorem{observation}[theorem]{Observation}
\newtheorem{fact}[theorem]{Fact}
\newtheorem{question}[theorem]{Open Question}
\newtheorem{conjecture}[theorem]{Conjecture}
\newtheorem{example}[theorem]{Example}
\newcommand{\dom}{\operatorname{dom}}
\newcommand{\id}{\textnormal{id}}
\def\2{\{0,1\}}
\newcommand{\Cantor}{{\2^\omega}}
\newcommand{\Baire}{{\mathbb{N}^\omega}}
\newcommand{\Lev}{\textnormal{Lev}}
\newcommand{\hide}[1]{}
\newcommand{\mto}{\rightrightarrows}
\newcommand{\uint}{{[0, 1]}}
\newcommand{\bft}{\mathrm{BFT}}
\newcommand{\lbft}{\textnormal{Linear-}\mathrm{BFT}}
\newcommand{\pbft}{\textnormal{Poly-}\mathrm{BFT}}
\newcommand{\sbft}{\textnormal{Smooth-}\mathrm{BFT}}
\newcommand{\ivt}{\mathrm{IVT}}
\newcommand{\cc}{\textrm{CC}}
\newcommand{\lpo}{\textrm{LPO}}
\newcommand{\llpo}{\textrm{LLPO}}
\newcommand{\aou}{AoU}
\newcommand{\Ctwo}{C_{\{0, 1\}}}
\newcommand{\name}[1]{\textsc{#1}}
\newcommand{\C}{\textrm{C}}
\newcommand{\UC}{\textrm{UC}}
\newcommand{\ic}[1]{\textrm{C}_{\sharp #1}}
\newcommand{\xc}[1]{\textrm{XC}_{#1}}
\newcommand{\me}{\name{P}.~}
\newcommand{\etal}{et al.~}
\newcommand{\eval}{\operatorname{eval}}
\newcommand{\rank}{\operatorname{rank}}
\newcommand{\Sierp}{Sierpi\'nski }
\newcommand{\isempty}{\operatorname{IsEmpty}}
\newcommand{\spec}{\textrm{Spec}}
\newcommand{\cord}{\textrm{COrd}}
\newcommand{\Cord}{\textrm{\bf COrd}}
\newcommand{\CordM}{\Cord_{\textrm{M}}}
\newcommand{\CordK}{\Cord_{\textrm{K}}}
\newcommand{\CordHL}{\Cord_{\textrm{HL}}}
\newcommand{\leqW}{\leq_{\textrm{W}}}
\newcommand{\leqsW}{\leq_{\textrm{sW}}}
\newcommand{\leW}{<_{\textrm{W}}}
\newcommand{\equivW}{\equiv_{\textrm{W}}}
\newcommand{\geqW}{\geq_{\textrm{W}}}
\newcommand{\pipeW}{|_{\textrm{W}}}
\newcommand{\nleqW}{\nleq_{\textrm{W}}}
\newcommand{\Det}{\textrm{Det}}
\newcommand{\R}{\textrm{R}}
\newcommand{\UR}{\textrm{UR}}
\newcommand{\bo}[2]{{{#1_{{\langle#2\rangle}}}}}
\newcommand{\Q}{\mathbb{Q}}
\def\T{\mathbb{T}}
\def\A{\mathcal{A}}
\def\K{\mathcal{K}}
\def\V{\mathcal{V}}
\def\O{\mathcal{O}}
\def\X{\mathbf{X}}
\def\Y{\mathbf{Y}}
\def\N{\mathbb{N}}
\newcommand{\pruned}{\mathcal{PT}}
\newcommand{\tree}{\mathcal{T}}
\newcommand{\treeL}{\mathcal{T}_\mathcal{L}}
\newcommand{\comp}{\uparrow}
\def\exS{S_{\mathrm{ex}}}
\def\exbarS{\bar{S}_{\mathrm{ex}}}
\newcommand{\cl}{\textrm{cl}}
\newcommand{\Real}{\mathbb{R}}
\newcommand{\extr}{\mathrm{ext}}

\newcommand\tboldsymbol[1]{%
\protect\raisebox{0pt}[0pt][0pt]{%
$\underset{\widetilde{}}{\boldsymbol{#1}}$}\mbox{\hskip 1pt}}

\newcommand{\bolds}{\tboldsymbol{\Sigma}}
\newcommand{\boldp}{\tboldsymbol{\Pi}}
\newcommand{\boldd}{\tboldsymbol{\Delta}}
\newcommand{\boldg}{\tboldsymbol{\Gamma}}

\newcounter{saveenumi}
\newcommand{\seti}{\setcounter{saveenumi}{\value{enumi}}}
\newcommand{\conti}{\setcounter{enumi}{\value{saveenumi}}}

\maketitle

\begin{abstract}    % type your abstract below
We explore representing the compact subsets of a given represented space by infinite sequences over Plotkin's $\T$. We show that computably compact computable metric spaces admit
representations of their compact subsets
in such a way that compact sets are essentially underspecified points.
We can even ensure that a name of an $n$-element compact set contains $n$ %$n-1$
 occurrences of $\bot$.
We undergo this study effectively  and show that
such a $\T^\omega$-representation is effectively obtained from structures of
computably compact computable metric spaces.
As an application, we prove some statements about the Weihrauch degree of closed choice for finite subsets of computably compact computable metric spaces.

Along the way, we introduce the notion of a computable dyadic subbase, and prove that every computably compact computable metric space admits a proper computable dyadic subbase.
\end{abstract}

\maketitle

%%%%%%%%%%%%%%%%%%%%   Start of main body of article

\section{Introduction}
In TTE \cite{weihrauchd}, the fundamental computability notion is introduced on either Cantor space $\Cantor$ or Baire space $\Baire$, and then lifted to other spaces of interest via representations. It is well-known that the choice of $\Cantor$ or $\Baire$ is inconsequential for the resulting theory, and authors typically choose whatever space works better for a specific purpose. In principle, other spaces can be used as the fundament, too, provided that they have a sufficiently substantial computability theory defined on them. Using the space of regular word functions has been advocated by \name{Kawamura} and \name{Cook} with computational complexity as the motivation \cite{kawamura}. If one is primarily interested in Quasi-Polish spaces \cite{debrecht6}, then the Scott domain $\mathcal{P}(\omega)$ makes sense as the foundational space, with a computability notion derived from enumeration reducibility (cf.~\cite{pauly-kihara-arxiv}).

Here we consider $\T^\omega$ as a fundamental space for computation, the space of infinite sequences over Plotkin's $\T$. Plotkin's $\T$ is the three point space $\{0,1,\bot\}$ with the topology generated by $\{\{0\},\{1\}\}$. Thus, $\bot$ plays the role of \emph{not yet determined}, whereas the values $0$ and $1$, once attained, will remain unchanged. The use of $\T^\omega$ (together with IM2 machines) as the basis for a theory of computability has been investigated by the second author in a number of papers \cite{tsuiki,tsuiki4,tsuiki3}. An interesting result is that a separable
metric space $\mathbf{X}$ admits an injective representation $\delta : \subseteq \T^\omega \to \mathbf{X}$ such that each $p \in \dom(\delta)$ has at most $n$ occurrences of $\bot$ iff the dimension of $\mathbf{X}$ is at most $n$.

This approach appears very similar to the domain-representations studied in e.g.~\cite{blanck}. The significant difference is that we allow for multivalued realizers,
and thus never change the resulting notion of computability. For domain-representations, realizers are required to be % be added H.T.
single-valued, and the resulting categories can thus differ fundamentally from the category of represented spaces and computable functions.

In the present paper we consider $\T^\omega$-representations of the space $\mathcal{K}(\mathbf{X})$ of compact subsets of some space $\mathbf{X}$ represented over $\Cantor$. We first introduce some notions on representations of non-empty compact subsets.
We are particularly interested in matching representations in the following sense:

\begin{definition}\label{def:match}
Consider a representation $\delta :\subseteq \Cantor \to \mathbf{X}$ and a $\T^\omega$-representation $\psi :\subseteq \T^\omega \to \mathcal{K}(\mathbf{X})\setminus \{\emptyset\}$. We say that they \emph{match} iff
%$\{q \mid p \preceq q \in \Cantor\} \subseteq \dom(\delta)$ and
\[
    \psi(p) = \{\delta(q) \mid p \preceq q \in \dom(\delta)\}
%    \psi(p) = \{\delta(q) \mid p \preceq q \in \Cantor\}
\]
for every $p \in \dom(\psi)$.
Here $\preceq$ denotes the specialization relation on $\T^\omega \supset \Cantor$
(see Section \ref{section-notation} for details on notation).
%We say that a representation $\delta :\subseteq \Cantor \to \mathbf{X}$ and a $\T^\omega$-representation $\psi :\subseteq \T^\omega \to \mathcal{K}(\mathbf{X})$ match if $\delta$ and $\psi|_{\mathcal{K}(\mathbf{X})\setminus \emptyset}$ matches.
\end{definition}

%Here, we say that a representation $\delta$ is proper if $\delta^{-1}(A)$ is compact for every compact subset $A \subset \X$.

A pair of matching representations means that we can obtain names for points inside the compact set by replacing $\bot$ with $0$ or $1$.
This essentially means that we can consider compact subsets as underspecified points. This seems like a counterpart to the identification of points in admissible spaces as being equivalent to compact singletons \cite{schroder5,pauly-synthetic}.
We consider more: we can obtain names for compact subsets
by replacing some copies of $\bot$ with $0$ or $1$.

\begin{definition}\label{def:hereditary}
A $\T^\omega$-representation $\psi :\subseteq \T^\omega \to \mathcal{K}(\mathbf{X}) \setminus \{\emptyset\}$ is \emph{hereditary} if the restriction of $\psi$ to $\uparrow\!\! p \cap \dom(\psi)$ is a
representation of $\uparrow\!\psi(p) \subseteq \mathcal{K}(\mathbf{X}) \setminus \{\emptyset\}$
for every $p \in \dom(\psi)$.
\end{definition}
Here,  $\uparrow\!a = \{x \mid x \succeq a\}$ is the upper set with respect to the specialization order $\preceq$,  which for $\mathcal{K}(\mathbf{X})$ is  $A \preceq B$ iff $B \subseteq A$.
If $\psi$ is a hereditary representation and $p \prec q$, then $\psi(p) \supseteq \psi(q)$.
Therefore, every name of a $n$-point set $A$ contains at least $n-1$ copies of $\bot$ because there is a chain of length $n-1$ in $\mathcal{K}(A) \setminus \{\emptyset\}$.
%We want even more for the case $\psi(p)$ is a finite set.
We want even more for our applications.
\begin{definition}\label{def:faithful}
A $\T^\omega$-representation $\psi :\subseteq \T^\omega \to \mathcal{K}(\mathbf{X}) \setminus \{\emptyset\}$ is \emph{faithful} if it is hereditary and the followings properties hold for the case $\psi(p)$ is a finite subset:
\begin{itemize}

\item[(1)] $p$ contains $|\psi(p)| - 1$ copies of $\bot$.
\item[(2)] $\uparrow p \cap \{0,1\}^\omega \subseteq \dom(\psi)$.
\end{itemize}
\end{definition}

Since the restriction of a faithful  $\T^\omega$-representation $\psi$ of $\K(\X) \setminus \{\emptyset\}$
to $\dom(\psi) \cap \{0,1\}^\omega $ is a representation of singleton sets,
it induces a representation $\delta$ of $\X$.
The pair $(\delta, \psi)$ obviously forms a matching representation.

From a $\T^\omega$-representation $\psi' :\subseteq \T^\omega \to \mathcal{K}(\mathbf{X})\setminus \{\emptyset\}$, one can easily form a $\T^\omega$-representation
$\psi :\subseteq \T^\omega \to \mathcal{K}(\mathbf{X})$ by defining $\psi(\bot p) = \psi'(p)$ and
$\psi(1 p) = \psi(0 p) = \emptyset$ for $p \in \T^\omega$.
Therefore, we extend the above notions to those of $\T^\omega$-representations of $\mathcal{K}(\mathbf{X})$ as follows.
\begin{definition}\label{def:match2}
A $\T^\omega$-representation $\psi :\subseteq \T^\omega \to \mathcal{K}(\mathbf{X})$  is faithful
iff,  $\psi(0p) = \psi(1p) = \emptyset$ and the $\T^\omega$-representation $\psi'$ of
$\mathcal{K}(\mathbf{X}) \setminus \{\emptyset\}$ defined as $\psi'(p) =\psi(\bot p)$ is
faithful.  Similary, we define a hereditary $\T^\omega$-representation of $\mathcal{K}(\mathbf{X})$
and a $\T^\omega$-representation of $\mathcal{K}(\mathbf{X})$ that
matches with a representation $\delta$ of $\X$.
\end{definition}
If $\psi$ is a faithful $\T^\omega$-representation of $\K(\X)$, then
every name of a $n$-point set contains $n$ copies of $\bot$.

We provide a faithful $\T^\omega$-representation for the case $\X$ being the Cantor space $\Cantor$ in Section \ref{sec:cantor}.
For more general spaces,
we  provide a construction of a faithful $\T^\omega$-representation
based on the concept of a proper dyadic subbase.
We study computable dyadic subbases
and  show that every computable metric space has a proper computable dyadic subbase by effectivizing a proof of \name{Tsukamoto} \cite{tsukamoto} in Section
\ref{sec:propersubbases}.
Then, in Section \ref{sec:mainresult},
we show that a computably compact computable metric space with a proper computable dyadic subbase has a faithful $\T^\omega$-representation.

  We represent compact subsets as trees using proper computable dyadic subbases,
and then construct a $\T^\omega$-representation of trees.
  Tree-based representations of compact sets appeared in \cite{berger}, where they used the Vietoris topology (aka the Hausdorff metric aka the full information representation) on the hyper space.  Our representation is linked to the upper Vietoris topology, which is compatible with $\T^\omega$-representations.

An initial motivation of this work was the investigation of a construction employed to prove \cite[Proposition 1.9]{paulyleroux}. That result is showing that the Weihrauch degree of finding a point in a closed subset of a computably rich computably compact computable metric space $\X$ with cardinality equal to $n$ (or up to $n$) does not depend on the choice of $\X$. The construction in \cite{paulyleroux} crucially depends on knowing $n$. Here, we can prove that the statement remains true without any bound on the finite cardinality. In Section \ref{sec:weihrauch}, we explore the resulting Weihrauch degree and relate it to other known degrees.

\section{Notation and Fundamentals}

\subsection{Notation}\label{section-notation}
For an alphabet $\Sigma$, we denote by $\Sigma^*$ respectively $\Sigma^\omega$ the space of
finite respectively infinite sequences over $\Sigma$.
For $w, v \in \Sigma^*$, let $w \sqsubseteq v$ express that $w$ is a prefix of $v$.
For $w \in \Sigma^*$, we let $|w| \in \mathbb{N}$ denote its length.
For $p \in {\Sigma}^\omega$ or $p \in {\Sigma}^*$, we denote by $p_{\leq n} \in {\Sigma}^n$ its restriction to first $n$ components.  Here, in the case $p \in {\Sigma}^*$, we assume that $n \leq |p|$.
As a special case, $\T^*$ is the space of finite sequences over $\T=\{0,1,\bot\}$.
We consider $\iota : \T^* \to \T^\omega$ mapping $e \in \T^*$ to $e\bot^\omega$ as a standard computable map, but point out that the partial inverse of $\iota$ is not computable
\footnote{Note that in the previous literature on proper dyadic subbases the expression $\T^*$ denotes our
$\iota(\T^*)$.}.

We call the number of digits (i.e., 0 or 1) in  $e \in \T^*$ the level of $e$
and denote it by $\mathrm{level}(e)$.
We denote by $\bo{\T^*}{n} \subset \T^*$ the set of level-$n$ sequences.
More generally, for a subset $A$ of $\T^*$,
we denote by $\bo{A}{n}$ the set of level-$n$ sequences of $A$.
For example,
$1 0 1, 1 \bot 1, 1\bot\bot, \bot\bot\bot$ all have length 3 and
belong to ${\bo{\T^*}{3}}, {\bo{\T^*}{2}}, {\bo{\T^*}{1}}, {\bo{\T^*}{0}}$, respectively.
We write $\bo{A}{\leq n} := \bigcup_{i \leq n} \bo{A}{i}$.
For $p \in \T^*$ or $p \in \T^\omega$, we define $\bo{p}{\leq n} \in \bo{\T^*}{n}$ as
$p_{\leq m+1}$
for $m$ the index of the $n$-th digit of $p$. For example, $\bo{\bot 01}{\leq 1} = \bot0$.

We write $a \preceq b$ for $a, b \in \T$ if $a = \bot \lor a = b$, and
$p \preceq q$ for $p, q \in \T^\omega$  if
$\forall n \in \mathbb{N} \ p(n) \preceq q(n)$.

 We write $\dom(p) = \{n \in \mathbb{N} \mid p(n) \neq \bot\}$. By $p \comp q$ we denote that $\forall n \in \dom(p) \cap \dom(q) \ p(n) = q(n)$. We extend these notions to $\T^*$ along $\iota$. Note that $\preceq$ is a partial order on $\T^\omega$ but is only a quasiorder on $\T^*$, as e.g.~$0 \preceq 0\bot$ and $0\bot \preceq 0$. By excluding from $e \in \T^*$ the finite sequence ending in $\bot$ we obtain canonic representative $\tau(e)$ of each $\preceq$-equivalence class.  $\tau(\T^*)$ is a partially ordered set.

By $\mathbf{3}$ we denote the set $\{0,1,\bot\}$ equipped with the representation $\delta_\mathbf{3}(10^\omega) = 0$, $\delta_\mathbf{3}(110^\omega) = 1$ and $\delta_\mathbf{3}(1110^\omega) = \bot$.
With the representation we will introduce in Section \ref{section-introtomega},
$\id : \mathbf{3} \to \mathbb{T}$ is computable, but $\id : \mathbb{T} \to \mathbf{3}$ is not. In the following, we will suppress both $\iota : \T^* \to \T^\omega$ and $\id : \mathbf{3} \to \mathbb{T}$ and their combinations in the notation. For example, if we have some function $f : \mathbb{T}^\omega \to \mathbf{X}$, we might speak of the function $f : \mathbf{3}^* \to \mathbf{X}$ obtained by precomposing with these computable functions without further notice.

\subsection{Background on represented spaces}
We briefly recall some fundamental concepts on represented spaces following \cite{pauly-synthetic}, to which the reader shall also be referred for a more extensive treatment. A \emph{represented space} is a pair $\mathbf{X} = (X, \delta_X)$ of a set $X$ and a partial surjection $\delta_X : \subseteq \Cantor \to X$. A (multivalued) function between represented spaces is a (multivalued) function between the underlying sets. For $f : \mathbf{X} \mto \mathbf{Y}$ and $F : \subseteq \Cantor \to \Cantor$, we call $F$ a realizer of $f$ (notation $F \vdash f$), iff $\delta_Y(F(p)) = f(\delta_X(p))$ for all $p \in \dom(f\delta_X)$, i.e.~if the following diagram commutes:
 $$\begin{CD}
\Cantor @>F>> \Cantor\\
@VV\delta_\mathbf{X}V @VV\delta_\mathbf{Y}V\\
\mathbf{X} @>f>> \mathbf{Y}
\end{CD}$$
A map between represented spaces is called computable (continuous), iff it has a computable (continuous) realizer. A priori, the notion of a continuous map between represented spaces and a continuous map between topological spaces are distinct. However, for the admissible spaces (as defined by \name{Schr\"oder} \cite{schroder,schroder5}) the notions coincide. This in particular includes the computable metric spaces.

We say that representations $\delta_1$, $\delta_2$ of the same set $X$ are equivalent, if $\id : (X,\delta_1) \to (X,\delta_2)$ and $\id : (X,\delta_2) \to (X,\delta_1)$ are both computable. In this case, we also say that $\delta_2$ is a representation of $(X,\delta_1)$, or that $(X,\delta_1)$ admits the representation $\delta_2$.

Cantor space itself is considered a represented space, with $\id : \Cantor \to \Cantor$ serving as the representation. Other specific relevant represented spaces are $\mathbb{N}$ represented by $\delta_\mathbb{N}(0^n1^\omega) = n$ and Sierpi\'nski space $\mathbb{S}$ with underlying set $\{\bot, \top\}$ and representation $\delta_\mathbb{S}$ defined by $\delta_\mathbb{S}(0^\omega) = \bot$ and $\delta_\mathbb{S}(p) = \top$ for $p \neq 0^\omega$.

Courtesy of the UTM-theorem (in the form proven in \cite{weihrauchk}), there is a natural representation of the space $\mathcal{C}(\mathbf{X},\mathbf{Y})$ of continuous functions between two given represented spaces $\mathbf{X}$, $\mathbf{Y}$. Then representation is rendering all the expected operations computable, in particular composition and evaluation. We immediately obtain a representation of any $\mathbf{X}^\omega$ in form of $\mathcal{C}(\mathbb{N},\mathbf{X})$. We can also derive the space $\mathcal{O}(\mathbf{X})$ of open subsets of $\mathbf{X}$ by identifying a set $U \subseteq \mathbf{X}$ with its characteristic function $\mathcal{C}(\mathbf{X},\mathbb{S}) \ni \chi_U : \mathbf{X} \to \mathbb{S}$ mapping $x \in U$ to $\top$ and $x \notin U$ to $\bot$. The open subsets are the final topology along the representation, and again, the expected operations are computable.

The represented space $\mathcal{A}(\mathbf{X})$ of closed subsets is defined
by considering the characteristic function of its complement. In other words, we define $\mathcal{A}(\mathbf{X})$ in such a way that $^C : \mathcal{O}(\mathbf{X}) \to \mathcal{A}(\mathbf{X})$ and $^C : \mathcal{A}(\mathbf{X}) \to \mathcal{O}(\mathbf{X})$ became computable.
We further introduce the space $\mathcal{K}(\mathbf{X})$ of compact subsets by representing $A \subseteq \mathbf{X}$ via $\{U \in \mathcal{O}(\mathbf{X}) \mid A \subseteq U\} \in \mathcal{O}(\mathcal{O}(\mathbf{X}))$.
Dually, we define the represented space $\mathcal{V}(\mathbf{X})$ of
overt subsets by representing $A \subseteq \mathbf{X}$ via $\{U \in \mathcal{O}(\mathbf{X}) \mid A \cap U \ne \emptyset\} \in \mathcal{O}(\mathcal{O}(\mathbf{X}))$. Note that the elements of $\mathcal{V}(\mathbf{X})$ individually are closed sets, but neither $\id : \mathcal{A}(\mathbf{X}) \to \mathcal{V}(\mathbf{X})$ nor $\id : \mathcal{V}(\mathbf{X}) \to \mathcal{A}(\mathbf{X})$ is computable for non-empty $\mathbf{X}$.

A represented space $\mathbf{X}$ is called computably compact if $\textrm{isEmpty} : \mathcal{A}(\mathbf{X}) \to \mathbb{S}$ is computable,
and computably overt if $\textrm{IsNonEmpty}: \mathcal{O}(\mathbf{X}) \to \mathbb{S}$ is computable.
It is called computably Hausdorff, iff $\mathalpha{\neq} : \mathbf{X} \times \mathbf{X} \to \mathbb{S}$ is computable. A space is computably compact and computably Hausdorff iff both $\id : \mathcal{A}(\mathbf{X}) \to \mathcal{K}(\mathbf{X})$ and $\id : \mathcal{K}(\mathbf{X}) \to \mathcal{A}(\mathbf{X})$ are well-defined and computable. As we will be working with (computable) compact Hausdorff spaces, we can freely alternate between treating sets represented as closed or as compact sets in the following.

\subsection{Computably compact computably Hausdorff countably-based spaces}
The proofs of our main results will make explicit use of the space $\X$ being computably compact and being computably Hausdorff. Moreover, since we are using dyadic subbases, the space has to be countably-based. It is a classical result in topology that countably based compact Hausdorff spaces are metrizable. This is shown in two steps: First, it is shown that compact Hausdorff spaces are regular, then Urysohn's metrization theorem tells us that countably-based regular Hausdorff spaces are metrizable.

In an effective context, Schr\"oder's effective metrization theorem \cite{grubba3,schroder8} shows that countably-based computably-regular computably Hausdorff spaces are computably metrizable. However, the standard proof that compact Hausdorff implies regular does not effectivize:

The argument proceeds as follows: We have a point $x \in \X$ and a (non-empty) closed set $A \in \mathcal{A}(\X)$ with $x \notin A$. For every $y \in A$ there exist disjoint open $U_y, V_y \in \mathcal{O}(\mathbf{X})$ with $x \in U_y$ and $y \in V_y$, since $\X$ is Hausdorff. Now $\{V_y \mid y \in A\}$ is an open cover of $A$. By compactness, it has a subcover indexed by a finite set $I \subseteq A$. Now $V = \bigcap_{y \in I} V_y$ and $U = \bigcup_{y \in I} U_y$ are disjoint open sets with $x \in V$ and $A \subseteq U$, establishing that $\X$ is regular. The problem is that to effectively use the open cover $\{V_y \mid y \in A\}$, we would need to know $A$ as an overt set, not as a closed set.

Whether there is a different proof that establishes the full effective version of this result has been raised as \cite[Question 9]{oberwolfach-computability}. However, we only need it in the countably-based case. Here, we have effectively open representations available, meaning that $U \mapsto \{\delta(p) \mid p \in U \cap \dom(\delta)\} : \mathcal{O}(\Cantor) \to \mathcal{O}(\mathbf{X})$ is well-defined and computable.

\begin{theorem}\label{theorem:coregular}
Let $\mathbf{X}$ admit an effectively open representation, be computably Hausdorff and computably compact. Then $\mathbf{X}$ is computably regular.
\begin{proof}
We are given $x \in \mathbf{X}$ and $A \in \mathcal{A}(\X)$ with $x \notin A$, and we need to compute $U, V \in \mathcal{O}(\mathbf{X})$ with $U \cap V = \emptyset$, $x \in U$ and $A \subseteq V$. We use computable compactness of $\X$ to obtain $A \in \mathcal{K}(\X)$. Since $\X$ is computably Hausdorff, there exists a computable function $N : \subseteq \Baire \times \Baire \to \mathbb{S}$ such that $N \circ \langle \delta_\X, \delta_\X\rangle$ is $\mathalpha{\neq} : \X \times \X \to \mathbb{S}$. We can extend $N$ to a total computable function, as $\mathbb{S}$ is precomplete.

We exhaustively search for finite prefixes $(v_w,w)$ such that the realizer of $N$ writes a $1$ somewhere when reading in a prefix $v_w$ of our given name for $x$ as first input, and $w$ as prefix of the second input. Since $\delta_\X$ is effectively open, we can then compute $U_w = \delta_{\X}[v_w\Baire] \in \mathcal{O}(\X)$ and $V_w = \delta_\X[w\Baire] \in \mathcal{O}(\X)$. The construction guarantees that $x \in U_w$, $U_w \cap V_w = \emptyset$ and $\bigcup_{w \in W} V_w = \X \setminus \{x\}$, where $W$ is the set of all prefixes we discover.

Since $A \subseteq \X \setminus \{x\}$, computable compactness lets us find a finite $W' \subseteq W$ with $A \subseteq \bigcup_{w \in W'} V_w$. Now $U = \bigcap_{w \in W'} U_w$ and $V = \bigcup_{w \in W'} V_w$ are computable from the given data, and satisfy our criteria.
\end{proof}
\end{theorem}

\begin{corollary}
Let $\mathbf{X}$ admit an effectively open representation, be computably Hausdorff and computably compact. Then $\mathbf{X}$ is computably metrizable.
\end{corollary}

Many results of this paper
(e.g., Theorem \ref{theo:dyadicexists}, \ref{theo:hereditary}, \ref{theo:minimalfaithful}) are about computably compact computably metrizable spaces (CCCMS). This corollary shows that they also hold for computably Hausdorff computably compact spaces with effectively open representations.

\subsection{Introducing $\T^\omega$-represented spaces}
\label{section-introtomega}
We can consider $\T$ as a represented space (over $\Cantor$) via the representation $\delta_{\T} : \Cantor \to \T$ defined by $\delta_{\T}(0^\omega) = \bot$, $\delta_{\T}(p) = 0$ iff $\min \{n \in \mathbb{N} \mid p(n) = 1\}$ is even and $\delta_{\T}(p) = 1$ iff $\min \{n \in \mathbb{N} \mid p(n) = 1\}$ is odd. From this representation we derive a representation $\delta_{\T^\omega}$ of $\T^\omega$ in the usual way.
Thus we have a notion of computability of (multivalued) functions on $\T^\omega$ available. We could alternatively define computability on $\T^\omega$ directly via IM2 machines, but will not do so here for sake of simplicity.

For an alternative equivalent approach, note that there is an embedding from $\T$ to $\mathbb{S} \times \mathbb{S}$ that maps
  0 and 1 to $(\top, \bot)$ and  $(\bot, \top)$, respectively. Therefore
  $\T^\omega$ embeds into $\mathbb{S}^\omega$ which has a natural enumeration-based representation.  Thus, we have a representation $\delta_{\T^\omega}'$ of
$\T^\omega$ by restricting the standard representation of $\mathbb{S}^\omega$. The representations $\delta_{\T^\omega}$ and $\delta_{\T^\omega}'$ are equivalent.

A $\T^\omega$-representation $\psi$ of some set $X$ is just a partial surjection $\psi : \subseteq \T^\omega \to X$, and a $\T^\omega$-represented space is a set equipped with a $\T^\omega$-representation of it. As $\Cantor \subset \T^\omega$, we can consider every (ordinary) representation as a special case of a $\T^\omega$-representation. Conversely, every $\T^\omega$-representation $\psi$ induces an ordinary representation $\psi \circ \delta_{\T^\omega}$.

Let $\mathbf{X}$, $\mathbf{Y}$ be $\T^\omega$-represented spaces. We call a multivalued function $F : \subseteq \T^\omega \mto \T^\omega$ a $\T^\omega$-realizer of $f : \subseteq \mathbf{X} \mto \mathbf{Y}$ iff $\emptyset \neq \delta_Y(F(p)) \subseteq f(\delta_X(p))$ for all $p \in \dom(f \circ \delta_X)$. Unlike the situation for ordinary representations, we also need multivalued realizers here. The reason is that not every computable multivalued function $F : \subseteq \T^\omega \mto \T^\omega$ has a computable choice function\footnote{For example, consider $G : \T^\omega \mto \T^\omega$ defined by $G(p) = \{0^\omega\}$ iff $p(0) = \bot$ and $G(p) = \Cantor \setminus \{0^\omega\}$ iff $p(0) \neq \bot$.}. Again, we call a (multivalued) function between $\T^\omega$-represented spaces computable (continuous), iff it has a computable (continuous) $\T^\omega$-realizer. The following is then straight-forward:

\begin{proposition}
\begin{enumerate}
\item Let $\mathbf{X}$ and $\mathbf{Y}$ be represented spaces. A multivalued function $f : \subseteq \mathbf{X} \mto \mathbf{Y}$ is computable (continuous) as a function between represented spaces iff it is computable (continuous) as a function between $\T^\omega$-represented spaces.
\item Let $\mathbf{X}$ and $\mathbf{Y}$ be $\T^\omega$-represented spaces, and $\overline{\mathbf{X}}$ and $\overline{\mathbf{Y}}$ the induced represented spaces. Then $f : \subseteq \mathbf{X} \mto \mathbf{Y}$ is computable (continuous) iff $f : \subseteq \overline{\mathbf{X}} \mto \overline{\mathbf{Y}}$ is.
\end{enumerate}
\end{proposition}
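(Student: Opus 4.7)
The plan is to chase the definitions, with two small auxiliary computable maps as the main ingredient. First I introduce an encoding $\iota : \Cantor \to \Cantor$ that turns a bit sequence $p \in \Cantor$ into a $\delta_{\T^\omega}$-name for the same $p$ now read as an element of $\T^\omega$: replace each bit $p(n)\in\{0,1\}$ by a $\delta_\T$-name for itself (say $0\mapsto 10^\omega$, $1\mapsto 010^\omega$) and interleave. Second I introduce a partial decoding $\pi : \subseteq \Cantor \to \Cantor$ which, on a $\delta_{\T^\omega}$-name, waits in each slot for a $1$ and outputs the parity bit; $\pi$ is computable and is total precisely on $\delta_{\T^\omega}$-names of points that lie in $\Cantor\subseteq\T^\omega$. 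By construction $\delta_{\T^\omega}\circ\iota = \id_{\Cantor}$, and $\pi$ agrees with $\delta_{\T^\omega}$ wherever both are defined.

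For part (1), the forward direction takes an ordinary realizer $F_0 : \subseteq \Cantor \to \Cantor$ of $f$ and views it as the singleton-valued $F : \subseteq \T^\omega \mto \T^\omega$ with $\dom(F)=\dom(F_0)$ and $F(p)=\{F_0(p)\}$; $\T^\omega$-computability of $F$ is witnessed by the tracker $\iota\circ F_0\circ\pi$, and the realizer condition $\{\delta_Y(F_0(p))\}\subseteq f(\delta_X(p))$ is immediate. For the backward direction, given a computable $\T^\omega$-realizer $F$ with tracker $\tilde F : \subseteq \Cantor \to \Cantor$, the composition $\pi\circ \tilde F\circ \iota$ is an ordinary realizer of $f$: on input $p\in\dom(f\delta_X)\subseteq\Cantor$, $\iota(p)$ is a $\delta_{\T^\omega}$-name for $p\in\T^\omega$, so $\delta_{\T^\omega}(\tilde F(\iota(p)))\in F(p)$; since $\delta_Y(F(p))\subseteq f(\delta_X(p))$ is nonempty and $\dom(\delta_Y)\subseteq\Cantor$, this element lies in $\Cantor$, so $\pi$ succeeds and produces an ordinary $\delta_Y$-name of some value in $f(\delta_X(p))$.

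For part (2), no encoding tricks are needed. A computable $f : \mathbf{X}\mto\mathbf{Y}$ is by definition witnessed by some $F : \subseteq\T^\omega\mto\T^\omega$ with a computable tracker $\tilde F : \subseteq\Cantor\to\Cantor$; unfolding the definitions shows $\tilde F$ is itself an ordinary realizer of $f : \overline{\mathbf{X}}\mto\overline{\mathbf{Y}}$. Conversely, an ordinary realizer $G : \subseteq\Cantor\to\Cantor$ of $f : \overline{\mathbf{X}}\mto\overline{\mathbf{Y}}$ gives rise to
\[
 F(q)=\{\delta_{\T^\omega}(G(r))\mid r\in\Cantor,\ \delta_{\T^\omega}(r)=q\},
\]
which is $\T^\omega$-computable via the tracker $G$ by construction, and the surjectivity of $\delta_{\T^\omega}$ together with the realizer condition on $G$ shows $\emptyset\neq \delta_Y(F(q))\subseteq f(\delta_X(q))$ for all $q\in\dom(f\delta_X)$. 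In both parts the continuous case is word-for-word the same with ``continuous'' in place of ``computable''.

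The only obstacle is conceptual rather than technical: keeping the two roles of $\Cantor$ cleanly apart — as the carrier space on which all tracking functions actually act, versus as the subspace of $\T^\omega$ of totally-defined points. The maps $\iota$ and $\pi$ make the translation between these two roles explicit and computable, after which all four implications reduce to routine diagram chasing.
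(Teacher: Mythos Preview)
Your argument is correct and supplies exactly the details the paper omits: the paper gives no proof at all, merely prefacing the proposition with ``The following is then straight-forward''. Your explicit encoding/decoding maps $\iota,\pi$ and the diagram-chasing on trackers constitute a clean and complete verification.

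One small remark on the backward direction of part~(1) (and, symmetrically, the forward direction of part~(2)): you write ``since $\delta_Y(F(p))\subseteq f(\delta_X(p))$ is nonempty and $\dom(\delta_Y)\subseteq\Cantor$, this element lies in $\Cantor$''. This step tacitly reads the paper's condition $\emptyset\neq\delta_Y(F(p))\subseteq f(\delta_X(p))$ as requiring $F(p)\subseteq\dom(\delta_Y)$, so that \emph{every} value of the multivalued realizer is a valid $\delta_Y$-name. That is indeed the intended and standard reading (a realizer whose tracker may return garbage would be useless), but the paper's phrasing is loose enough that a literal set-image interpretation would only guarantee \emph{some} element of $F(p)$ lies in $\dom(\delta_Y)$, which would not suffice for your $\pi$ to terminate on the particular output produced by $\tilde F$. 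It would be worth making this reading explicit in one sentence. Also note that the paper already uses the symbol $\iota$ for the embedding $\T^*\to\T^\omega$, so you may want to rename your encoding map to avoid a clash.
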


 We thus see that the category of represented spaces and  computable (continuous) (multi-valued) functions is equivalent to the category of $\T^\omega$-represented space and computable (continuous) (multi-valued) functions. We can identify a $\T^\omega$-represented space with  the induced represented space.

\section{$\T^\omega$-representations of pruned trees}
\label{sec:cantor}
We recall that a (binary) tree is a set $T \subseteq \2^*$ such that $v \in T \ \wedge \ w \sqsubseteq v$ implies $w \in T$. The elements of a tree are called vertices. By using some standard bijection $\nu : \mathbb{N} \to \2^*$, we can then represent a tree $T$ by its characteristic function $\chi_T \in \Cantor$. We shall denote the represented space of binary trees by $\tree$.

Some $p \in \Cantor$ is called an infinite path trough a tree $T$ if $\forall n \in \mathbb{N} \ p_{\leq n} \in T$. The set of infinite paths through $T$ is denoted by $[T]$. It is well-known that the closed subsets of $\Cantor$ arise as $[T]$ for some tree in a uniform way. In other words $T \mapsto [T] : \tree \to \mathcal{A}(\Cantor)$ is computable and has a computable multivalued inverse.

A tree is pruned, if $w \in T$ implies $\exists v \in T \ w \sqsubset v$. By induction, in a pruned tree every vertex is the prefix of some infinite path through it. Moreover, for any tree $T$ there is a unique pruned tree $T_{p}$ such that $[T] = [T_p]$ -- however, $T_p$ is not computable from $T$, with the Kleene tree being the canonic counterexample.

\begin{definition}
\label{def:pruned}
We define the $\T^\omega$-represented space $\pruned$ by letting the underlying set be the pruned binary trees, and the $\T^\omega$ representation $\delta_{\pruned} : \T^\omega \to \pruned$ be defined as follows:
\begin{align*} \delta_\pruned (p) = T {
\ \stackrel{\mathrm{Def}}{\Longleftrightarrow}}
\  &\left ( p(0) = \bot \Leftrightarrow T \neq \emptyset \right ) \wedge\\
&\ \forall n \in \N
[ \left ( \nu(n) \in T \wedge p(n+1) \neq 1 \rightarrow \nu(n)0 \in T \right )
\wedge \\ & \hspace*{1.25cm} \left ( \nu(n) \in T \wedge p(n+1) \neq 0 \rightarrow \nu(n)1 \in T \right ) ]
 \end{align*}
\end{definition}

This means that we use the first symbol in a $\delta_\pruned$-name for a tree to indicate whether the tree is empty or not, with $\bot$ representing non-emptyness. If the tree is non-empty, then clearly $\varepsilon \in T$. Then for any vertex $w$ of the tree, the value of $p(\nu^{-1}(w) +1)$ indicates whether the left child, the right child or both are part of the tree. This represents precisely the pruned trees, as there the fourth case of \emph{neither} does not apply.

\begin{theorem}
\label{theo:prunedtrees}
The map $\textrm{Prune} : \tree \to \pruned$ is computable and has a computable multivalued inverse.
\begin{proof}
We compute the pruned tree $T' \in \pruned$ from $T \in \tree$, we read through $T$ layer by layer (i.e.~consider all $w \in T$ with $|w| = n$ at the same time).
 While doing so, we construct a $\delta_\pruned$-name $q$ of $T'$. We can assume that initially, $q = \bot^\omega$, and then change entries in $q$ to $0$ or $1$ as required.

If at any stage of the computation we find that $q(\nu(w)+1) = \bot$ and there is some $k \in \mathbb{N}$ such that for all $v \sqsupseteq w0$ with $|v| = k$ we learn that $v \notin T$, then we set $q(\nu^{-1}(w)+1) := 1$. If $q(\nu^{-1}(w)+1) = \bot$ and there is some $k \in \mathbb{N}$ such that for all $v \sqsupseteq w1$ with $|v| = k$ we learn that $v \notin T$, then we set $q(\nu^{-1}(w)+1) := 0$. If we ever find some $k \in \mathbb{N}$ such that $v \notin T$ for all $v$ with $|v| = k$, then we set $q(0) = 0$. Using compactness, it is straight-forward to verify that this yields a valid $\delta_\pruned$-name for $T'$. Note that moreover, every entry in the resulting name $q$ which is not specified by the definition of $\delta_\pruned$ will be either $0$ or $1$, but never $\bot$.

Now let us consider how to compute the multivalued inverse of $\textrm{Prune}$. We start with some $\delta_{\T^\omega}$-name $p \in \2^\omega$
of a $\delta_\pruned$-name $q \in \T^\omega$
of some pruned tree $T$. For $w \in \2^*$ and $n \in \mathbb{N}$, let $b_{w,n} := \delta_\pruned(\delta_{\T^\omega}(p_{\leq n}0^\omega))(\nu^{-1}(w)+1)$. Let $a_n := \delta_\pruned(\delta_{\T^\omega}(p_{\leq n}0^\omega))(0)$. Note that given $p$, $n$, $w$ we can compute $b_{w,n} \in \mathbf{3}$ and $a_n \in \mathbf{3}$.

We define a tree $T' \in \tree$ by setting $v \in T'$ iff $a_{|v|} = \bot \wedge \forall k < |v| \ ( b_{v_{\leq k}, |v|} = \bot \vee b_{v_{\leq k}, |v|} = v(k+1) )$. This is a tree by monotonicity of the condition (which is in part derived from the monotonicity of $a_n$ and $b_{w,n}$ in $n$). As $b_{w,n} \in \mathbf{3}$ is available, the condition is decidable, and thus the tree is known as an element of $\tree$. It is straight-forward to verify that $[T'] = [T]$.
\end{proof}
\end{theorem}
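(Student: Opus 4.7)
The theorem splits into forward and backward directions; both exploit the fact that the $\T^\omega$-representation lets us leave positions undetermined simply by never writing to them.

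For computability of $\textrm{Prune}$, I would start with an output buffer $q = \bot^\omega$ and enumerate $\chi_T$ layer by layer. Whenever I witness a level $k$ with every $v \sqsupseteq \nu(n)0$ of length $k$ absent from $T$, I commit $q(n+1) := 1$; symmetrically, if this happens for the right subtree below $\nu(n)1$, I commit $q(n+1) := 0$. If some full level is ever entirely absent from $T$, I commit $q(0) := 0$ to declare emptiness. By K\"onig's lemma, a vertex $w \in T$ lies in the pruned subtree $T_p$ exactly when none of the above commits fires at an ancestor-branching of $w$, so the output encodes $T_p$ with $\bot$-positions exactly where both children survive in $T_p$ --- which is precisely what $\delta_\pruned$ requires.

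For the multi-valued inverse, given $q \in \Cantor$ coding a $\delta_\pruned$-name $p$ of a pruned tree $T$, I would build a tree $T' \in \mathcal{T}$ with $[T'] = [T]$ via finite approximations. For each $n$, the padded prefix $q_{\leq n}0^\omega$ can be decoded to $\mathbf{3}$-valued snapshots $s_n$ of $p(0)$ and $t_{w,n}$ of $p(\nu^{-1}(w)+1)$; because $0^\omega$-blocks decode to $\bot$ under $\delta_\T$, both are monotone in $n$ under $\preceq$ and converge to the genuine values. I would then declare $v \in T'$ iff $s_{|v|} = \bot$ and, for each $k < |v|$, $t_{v_{\leq k}, |v|} \in \{\bot, v(k+1)\}$. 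Since $\mathbf{3}$ is decidable, this is a decidable, prefix-closed condition, so $T'$ is a computable element of $\mathcal{T}$.

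The delicate step is verifying $[T'] = [T]$. The inclusion $[T] \subseteq [T']$ is essentially immediate, as every prefix of an infinite path in $T$ stays compatible with every $\preceq$-approximation of the pruned name. For $[T'] \subseteq [T]$, if $v \notin T$, then either $T = \emptyset$ (in which case $s_n$ eventually witnesses $p(0) \neq \bot$), or there is a last prefix $v_{\leq k} \in T$ whose branching bit $p(\nu^{-1}(v_{\leq k})+1)$ must equal the digit opposite to $v(k+1)$; since $\delta_\T$ reveals any non-$\bot$ value from a finite prefix of its $\Cantor$-name, this digit appears in $t_{v_{\leq k}, n}$ for large enough $n$, excluding every sufficiently long extension of $v$ from $T'$. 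Hence no infinite path in $T'$ can pass through a vertex outside $T$, and the two sets of infinite paths coincide.
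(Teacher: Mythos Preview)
Your proposal is correct and follows essentially the same construction as the paper: the same $\bot^\omega$-initialized output with subtree-extinction commits for $\textrm{Prune}$, and the same snapshot-based tree $T'$ (via the padded prefixes $q_{\leq n}0^\omega$) for the inverse. You additionally spell out the verification of $[T'] = [T]$ that the paper leaves as ``straight-forward,'' and your argument there is sound.
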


Let $\delta'_\pruned$ be the restriction of $\delta_\pruned$ so that
$p(n+1) \ne \bot $ if $\nu(n) \not \in \delta'_\pruned(p)$.  That is,
all non-specified entries are 0 or 1 in a name of a pruned tree.
One can see from the proof that $\textrm{Prune}$ is a map to this restricted represented space.

\begin{corollary}
\label{corr:nicefunctionscantor}
There is a surjection $t_{\Cantor}: \T^\omega \to {\A}(\Cantor)$ and a multivalued
map $s_{\Cantor}: {\A}(\Cantor) \rightrightarrows \T^\omega$ such that
\begin{enumerate}
\item $s_{\Cantor}$ and $t_{\Cantor}$ are computable.
\item $t_{\Cantor} \circ s_{\Cantor} = \id_{\A(\Cantor)}$.
\item $t_{\Cantor}$ is hereditary.
\item The restriction of $t_\Cantor$ to the domain of $\delta'_\pruned$ is faithful. %  image of $s_\Cantor$ is faithful.
In particular, if $A \in {\A}(\Cantor)$ is a finite set, then
the cardinality of $A$ is equal to the number of $\bot$ in $s_{\Cantor}(A)$.
\end{enumerate}
\begin{proof}
We obtain $t_{\Cantor}$ as $(T \mapsto [T]) \circ (\textrm{Prune}^{-1}) \circ \delta_\pruned$, and then $s_{\Cantor}$ as its multivalued inverse
computed through the realizer of $\textrm{Prune}$.
That $t_{\Cantor}$ is hereditary directly follows from its construction.
Property $(4)$ follows from the observation that  $s_{\Cantor}$ is a map to
the domain of $\delta'_\pruned$.
\end{proof}
\end{corollary}

We define $\psi_\Cantor$ as the faithful $\T^\omega$-representation
obtained by restricting  $t_\Cantor$ to the domain of $\delta'_\pruned$.

\section{Proper computable dyadic subbases}
\label{sec:propersubbases}

In order to obtain a result akin to  Corollary \ref{corr:nicefunctionscantor} for a larger class of spaces, we will utilize the notion of a proper dyadic subbase.  %Theorem {theo:matchingcantor}
It was introduced in \cite{tsuiki},
and further studied in \cite{tsuiki2,tsuiki4,TsukamotoTsuiki:2016,tsukamotophd}.
The original motivation was to generalize the role of the binary and signed binary representations of real number: A proper dyadic subbase induces both (1) a ``tiling'' coding generalizing the binary expansion and (2) ``covering'' coding (which forms an admissible representation) generalizing the signed binary expansion.

The definition of a (not necessarily proper) dyadic subbase was changed in \cite{TsukamotoTsuiki:2016} compared to the previous literature.  Here, we adopt the definition from \cite{TsukamotoTsuiki:2016} and introduce the notion of a computable dyadic subbase of a represented space.

\subsection{Computable dyadic subbases}

\begin{definition}\rm\label{def:dyadicsubbase}
A \emph{dyadic subbase} over a set $X$ is
a map $S : {\mathbb N} \times \2 \to \mathcal{P}(X)$ such that $S(n,0) \cap S(n,1) = \emptyset$ for every $n \in \mathbb N$ and if $\{(n,i) \mid x \in S(n,i)\} = \{(n,i) \mid y \in S(n,i)\}$ for $x, y \in X$, then $x = y$.
\end{definition}

We write $S_{n,i}$ for $S(n,i)$ and $S_{n,{\bot}} = X \setminus (S_{n,0} \cup S_{n,1})$.
For a dyadic subbase $S$ and $p \in \T^{\omega}$, define
\begin{align}
&S(p)=\bigcap_{k \in \dom(p)}S_{k,p(k)}\, .
\end{align}
We say that $S$ is a dyadic subbase of a topological space $X$ if $\{S(n, i) \mid n \in \N, i \in \2\}$ is a subbase of $X$, or equivalently,
$\{S(e) \mid e \in \T^*\}$ is a base of $X$.

A dyadic subbase $S$ defines an injection
$\varphi_S$ from $X$ to $\T^\omega$ as follows.
$$
\varphi_{S}(x)(n) = \left\{\begin{array}{ll}
0 &( x \in S_{n,0}),\\
1 &( x \in S_{n,1}),\\
\bot &(x \in S_{n,\bot}).
\end{array}
\right.
$$
Any dyadic subbase $S$ over a set $X$ thus induces a $\T^\omega$-representation $\varphi^{-1}_S : \subseteq \T^\omega \to X$. We shall denote the resulting ($\T^\omega$-)represented space by $\X_S$.
On the other hand, any injective $\T^\omega$-representation $\delta$ induces a dyadic subbase $S$
  defined as $S(n, i) = \{x \mid \delta^{-1}(x)(n) = i \}$. Therefore, a dyadic subbase can be identified with an injective $\T^\omega$-representation.

\newcommand{\I}{{\mathbb I}}
\begin{example}\label{ex:gray} Let $\I$ be the interval $[-1, 1]$.  Let $\mathbf t: \I \to \I$ be the tent function
${\mathbf t}(x) = 1- 2|x|$.
We define  the dyadic subbase $G: \N \times \2 \to \cal{P}(\I)$ recursively as $G(0,0) = [-1, 0)$, $G(0,1) = (0, 1]$,
$G(n, i) = {\mathbf t}^{-1}(G(n-1, i))$.  We have
\begin{align*}
\varphi_G(x)(0) &= \left \{\begin{array}{ll}0 & (x < 0)\\
                              \bot & (x = 0)\\
                               1   &  (x > 0)\end{array}\,,\right .\\
\varphi_G(x)(n) &= \varphi_G({\mathbf t}(x))(n-1)\,.
\end{align*}
We have
$G(010^n) = (-1/2^{n+1},0)$,  $G(110^n) = (0, 1/2^{n+1})$,
$G(\bot 10^n) = (-1/2^{n+1}, 1/2^{n+1})$, and $\varphi_G(0) = \bot 1 0^{\omega}$.
$\varphi_G(x)$ contains $\bot$ if $x$ is a dyadic rational (numbers of the form $m/2^n)$, and the sequence after a $\bot$ is always $10^\omega$.
%More generally,
%all the intervals of the form $(m/2^{n+1}, (m+1)/2^{n+1})$ are expressed as $G(e)$ for $e \in \{0, 1\}^*$, $|e| = n+2$, and
%intervals of the form $((2m-1)/2^{n+1}, (2m+1)/2^{n+1})$
%are expressed as $G(e\bot10^k)$ for $e \in \{0, 1\}^*$ and $|e| + k = n$.
%Therefore,
One can easily see that $G$ is a dyadic subbase of $\I$ with the Euclidean topology.

\end{example}

\begin{definition}\label{def:dyadicsubbase3}
We say that $S$ is a (computable) dyadic subbase of a represented space $\mathbf{X}$ if
$\X_S$ is computably isomorphic to $\mathbf{X}$.
\end{definition}

The adjective \emph{computable} is redundant, but in order to avoid confusion between a dyadic subbase of a topological space and of a represented space, we add this adjective to the latter case.
Recall that a represented space $\X$ is an admissible represented space if
the map $x \mapsto \{U \mid x \in U\} : \X \to \mathcal{O}(\mathcal{O}(\mathbf{X}))$ has a computable partial inverse.
This definition also applies to $\T^\omega$-represented spaces.
The following proposition says that only admissible represented spaces have computable dyadic subbases.

\begin{proposition}
  $\X_S$ is an admissible represented space.
\end{proposition}
\begin{proof}
  With the representation $\varphi^{-1}_S \circ \delta'_{\T^\omega}$,
$p \in \2^\omega$ is a name of $x$ if $p$ is an enumeration of $\{(n,i) \mid x \in S(n,i)\}$.
That is, it is the standard representation with respect to the subbase
$\{S(n,i) \mid n \in \N, i \in \{0,1\} \}$, which is admissible.
\end{proof}

We can characterize a computable dyadic subbase as follows.
Recall that $\mathbf{3} = (\{0,1,\bot\}, \delta_\mathbf{3})$ and
$\mathcal{O}(\mathbf{3}^*)$ can be seen as the space of enumerations of sets of finite words over $\{0,1,\bot\}$. Therefore, $\{\bigcup_{e \in I} S(e) \mid  I \in \mathcal{O}(\mathbf{3}^*)\}$ is a base of $X$.

\begin{proposition} \label{prop:computablepds}
Suppose that $\mathbf{X}$ is an admissible represented space.
A dyadic subbase $S$
is a computable dyadic subbase of $\mathbf{X}$ iff $I \mapsto \bigcup_{e \in I} S(e): \mathcal{O}(\mathbf{3}^*) \to \mathcal{O}(\mathbf{X})$ is computable and has a computable multi-valued inverse $D : \mathcal{O}(\mathbf{X}) \mto \mathcal{O}(\mathbf{3}^*)$.

\begin{proof}
Suppose that the right hand side of the statement holds. Then in particular, $S : \mathbb{N} \times \mathbf{2} \to \mathcal{O}(\mathbf{X})$ is computable. As $(x,U) \mapsto (x \in U?) : \mathbf{X} \times \mathcal{O}(\X) \to \mathbb{S}$ is by definition computable, and the promise that $S(n,0) \cap S(n,1) = \emptyset$, we find that $\chi : \mathbf{X} \times \mathbb{N} \to \T$ mapping $(x,n)$ to $b$ iff $x \in S(n,b)$ is computable. This lets us compute $\id : \X \to \X_S$.

To show that $\id : \X_S \to \X$ is computable, too, recall that for admissible $\X$ the map $x \mapsto \{U \mid x \in U\} : \X \to \mathcal{O}(\mathcal{O}(\mathbf{X}))$ has a computable partial inverse. Thus, it suffices to show that $(x,U) \mapsto (x \in U?) : \X_S \times \mathcal{O}(\mathbf{X}) \to \mathbb{S}$ is computable.
Since $D$ is computable, it suffices to show the
computability of the map $\eta : \X_S \times \mathcal{O}(\mathbf{3}^*) \to \mathbb{S}$ mapping $(x,I)$ to $\top$ iff $\exists e \in I \ x \in S(e)$. As $\mathbf{3}^*$ is computably overt (which makes existential quantification computable), this in turn follows from the computability of $(x,e) \mapsto (x \in S(e)?) : \X_S \times \mathbf{3}^* \to \mathbb{S}$.

The reverse implication is obvious from how $\X_S$ inherits its topology as a subspace of $\T^\omega$.
\end{proof}
\end{proposition}

For a dyadic subbase $S$ and $p \in \T^{\omega}$, define
\begin{align}
&\bar{S}(p)= \bigcap_{k \in \dom(p)} (X \setminus {S_{k,1-p(k)}}) =
\bigcap_{k \in \dom(p)} ({S_{k,p(k)}} \cup {S_{k,{\bot}}}).
\end{align}

We have
\begin{alignat}{2}
x\in S(p)&\  \  \Leftrightarrow\  \  \varphi_S(x)(k)=p(k)\text{ for }k\in \dom (p) &
\  \  \Leftrightarrow\  \   &\varphi_S(x) \succeq p
,\label{eq:s}\\
x\in \bar{S}(p) &\  \  \Leftrightarrow\  \
\varphi_S(x)(k) \preceq p(k)\text{ for } k\in \dom(p)&
\  \  \Leftrightarrow\  \   &\varphi_S(x)\comp p.\label{eq:cs}
\end{alignat}
These equations show that $S$ and $\bar{S}$ are order-theoretic notion in $\T^\omega$.

\begin{proposition}
\label{prop:dyadicsubbase}
Let $S$ be a computable dyadic subbase of $\mathbf{X}$. Then $e \mapsto S(e) : \mathbf{3}^* \to \mathcal{O}(\mathbf{X})$ and $p \mapsto \overline{S}(p) : \T^\omega \to \mathcal{A}(\mathbf{X})$ are computable.
\begin{proof}
This follows immediately from the definition of $S(e)$ and $\overline{S}(p)$ together with the observation that finite intersection $\cap : (\mathcal{O}(\mathbf{X})^* \to \mathcal{O}(\mathbf{X})$, countable intersection $\bigcap : \mathcal{A}(\mathbf{X})^\omega \to \mathcal{A}(\mathbf{X})$ and complement $\phantom{U}^C : \mathcal{O}(\mathbf{X}) \to \mathcal{A}(\mathbf{X})$ are computable (eg \cite{pauly-synthetic}).
\end{proof}
\end{proposition}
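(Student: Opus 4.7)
The plan is to unfold the definitions of $S(p)$ and $\overline{S}(p)$ and realise each as the composition of operations already known to be computable: the computable map $(n,i) \mapsto S_{n,i}$ given as data, the complement $\phantom{U}^C : \mathcal{O}(\mathbf{X}) \to \mathcal{A}(\mathbf{X})$, finite intersection $\cap : \mathcal{O}(\mathbf{X})^* \to \mathcal{O}(\mathbf{X})$, and countable intersection $\bigcap : \mathcal{A}(\mathbf{X})^\omega \to \mathcal{A}(\mathbf{X})$.

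The first assertion is essentially unwinding notation. Since the input $p$ is presented as an element of $\mathbf{3}^*$, both its length $|p|$ and the predicate ``$p(k) = \bot$'' are decidable, so one can compute the finite set $\dom(p)$, assemble the list $(S_{k,p(k)})_{k \in \dom(p)}$ using the computable subbase, and apply finite intersection in $\mathcal{O}(\mathbf{X})$.

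For the second assertion I would first rewrite
$$
\overline{S}(p) \;=\; \bigcap_{k \in \N} C_k, \qquad C_k := \begin{cases} X \setminus S_{k,\,1-p(k)} & \text{if } p(k) \in \{0,1\}, \\ X & \text{if } p(k) = \bot, \end{cases}
$$
which agrees with the original definition because $X$ is neutral for intersection in $\mathcal{A}(\mathbf{X})$. It then suffices to exhibit $k \mapsto C_k$ as a computable element of $\mathcal{A}(\mathbf{X})^\omega$ from the $\T^\omega$-name of $p$. I would factor $C_k = O_k^C$ with $O_k := g_k(p(k))$, where $g_k : \T \to \mathcal{O}(\mathbf{X})$ sends $0 \mapsto S_{k,1}$, $1 \mapsto S_{k,0}$, and $\bot \mapsto \emptyset$; then a final application of computable countable intersection in $\mathcal{A}(\mathbf{X})$ yields $\overline{S}(p)$.

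The main obstacle is the verification that $(k,p) \mapsto g_k(p(k))$ is computable into $\mathcal{O}(\mathbf{X})$, since one cannot decide the case $p(k) = \bot$. Using $\mathcal{O}(\mathbf{X}) \cong \mathcal{C}(\mathbf{X},\mathbb{S})$ and Cartesian closure, this reduces to showing that the set $\{(t,x) \in \T \times \mathbf{X} : x \in g_k(t)\} = (\{0\} \times S_{k,1}) \cup (\{1\} \times S_{k,0})$ is computably open in $\T \times \mathbf{X}$, uniformly in $k$, which is immediate because $\{0\}$ and $\{1\}$ are computably open in $\T$ and the subbase is computable. Defaulting $g_k(\bot) = \emptyset$ is harmless precisely because $\emptyset$ is the bottom of $\mathcal{O}(\mathbf{X})$ in the specialization order, so the map commits to nothing until an actual digit of $p(k)$ is observed.
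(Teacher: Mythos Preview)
Your proof is correct and follows the same route as the paper: both derive the result from the computability of the subbase map together with the standard computable operations (finite intersection on $\mathcal{O}(\mathbf{X})$, complement, countable intersection on $\mathcal{A}(\mathbf{X})$). The paper's argument is a one-line appeal to these facts, whereas you additionally spell out the one point that deserves care---namely that for $p \in \T^\omega$ the index set $\dom(p)$ is not decidable, so one must pad the intersection by $X$ at undetermined coordinates and check that the map $t \mapsto g_k(t)$ is continuous out of $\T$; your currying argument handles this cleanly.
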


\begin{corollary}
\label{corr:compactsemicharac}
Let $S$ be a computable dyadic subbase of computably compact $\X$. Then $\{e \in \mathbf{3}^* \mid \overline{S}(e) = \emptyset\}$ is recursively enumerable.
\end{corollary}

\begin{proposition}
\label{prop:overtcharac}
Let $S$ be a computable dyadic subbase of $\mathbf{X}$. Then $\mathbf{X}$ is computably overt iff $\{e \in \mathbf{3}^* \mid S(e) \neq \emptyset\}$ is recursively enumerable.
\begin{proof}
By definition, $\X$ is computably overt iff $\operatorname{isNonempty} : \mathcal{O}(\X) \to \mathbb{S}$ is computable. By composing this with computable $(e \mapsto S(e)) : \mathbf{3}^* \to \mathcal{O}(\X)$, we conclude that $\{e \in \mathbf{3}^* \mid S(e) \neq \emptyset\}$ is recursively enumerable.

For the converse direction, note that from $U \in \mathcal{O}(\X)$ we can compute $I \in \mathcal{O}(\mathbf{3}^*)$ with $U = \bigcup_{e \in I} S(e)$, and that $U \neq \emptyset$ iff $\exists e \in I \ S(e) \neq \emptyset$.
\end{proof}
\end{proposition}

\subsection{Proper dyadic subbases}

\begin{definition}
We say that a dyadic subbase $S$ is {\em proper} if
$\textrm{cl}\,S(e)=\bar{S}(e)$ for every $e \in \T^*$.
\end{definition}

If $S$ is a proper dyadic subbase,
$S_{n,0}$ and  $S_{n,1}$ are regular open sets which are exteriors of each other.  That is, $S_{n,\bot}$ is the common boundary between them and
$\textrm{cl}\, S_{n,i} = S_{n,i} \cup S_{n,\bot}$.
Therefore, %The equivalences (\ref{eq:s}) and (\ref{eq:cs}) show that
if $S$ is a proper dyadic subbase then
a sequence $\varphi_S(x)$
not only contains information
$\{S_{n,i} \mid \varphi_S(x)(n) = i\}$ on
the basic open sets $x$ belongs to, but also information
$\{S_{n,i} \mid \varphi_S(x)(n) \in \{i,\bot\}\}$
on the basic open sets to whose closure $x$ belongs.

\begin{proposition}
\label{prop:propercharac}
Let $S$ be a dyadic subbase.   The followings are equivalent:
\begin{itemize}
\item[(1)] $S$ is a proper dyadic subbase.  That is, $\textrm{cl}\,S(e)=\bar{S}(e)$ for every $e \in \T^*$.
\item[(2)] $\textrm{cl}\,S(e)=\bar{S}(e)$ for every $e \in \{0,1\}^*$.
\item[(3)] $S(e) = \emptyset \Leftrightarrow \overline{S}(e) = \emptyset$
for every $e \in \T^*$.
\item[(4)] $S(e) = \emptyset \Leftrightarrow \overline{S}(e) = \emptyset$
for every $e \in \{0,1\}^*$.
\end{itemize}
\begin{proof} The equivalence of (1) and (2) is given in Lemma 9 of \cite{tsuiki2}.
The equivalence of (1) and (3) is given in Proposition 2.7 of \cite{tsuiki3}.
The equivalence of (1) and (4) is similar.
\end{proof}
\end{proposition}

\begin{corollary}\label{cor:proper}
  If $S(e) \ne \emptyset$ for every $e \in \{0,1\}^*$, then $S$ is a proper dyadic subbase.
\end{corollary}

\begin{example}
  The Gray subbase is proper by Corollary \ref{cor:proper}.
\end{example}

\begin{proposition}\label{prop:pi02}
If $S$ is a computable dyadic subbase of a computably compact and computably overt space $\X$, then $S$ being proper is a $\Pi^0_2$-property.
\end{proposition}
\begin{proof}
  From
  Corollary \ref{corr:compactsemicharac}, Proposition \ref{prop:overtcharac},
  and Proposition \ref{prop:propercharac}.
\end{proof}

\begin{proposition}\label{prop:decidable}
If $S$ is a proper computable dyadic subbase of a computably compact computably overt space $\X$, then
it is decidable whether $S(e) = \emptyset$ for $e \in \mathbf{3}^*$.
\end{proposition}
\begin{proof}
These observations follow from Proposition \ref{prop:propercharac} in combination with Corollary \ref{corr:compactsemicharac} and Proposition \ref{prop:overtcharac}.
\end{proof}

\begin{corollary}
\label{corr:dyadicsubbase}
Let $S$ be a proper computable dyadic subbase of computably overt $\mathbf{X}$. Then $e \mapsto \overline{S}(e) : \mathbf{3}^* \to \left ( \mathcal{A}(\mathbf{X}) \wedge \mathcal{V}(\mathbf{X}) \right )$ is computable.
\begin{proof}
As $w \mapsto w\bot^\omega : \mathbf{3}^* \to \T^\omega$ is computable, we obtain $e\mapsto \overline{S}(e) : \mathbf{3}^*
 \to \mathcal{A}(\mathbf{X})$ from Proposition \ref{prop:dyadicsubbase}. By definition of a proper dyadic subbase, $\overline{S}(e) = \textrm{cl}\, S(e)$ for $e \in \T^*$, and $\textrm{cl} : \mathcal{O}(\mathbf{X}) \to \mathcal{V}(\mathbf{X})$ is computable for computably overt $\X$ (eg \cite{pauly-synthetic}). We thus obtain $e \mapsto \overline{S}(e) : \mathbf{3}^* \to \mathcal{V}(\mathbf{X})$.
\end{proof}
\end{corollary}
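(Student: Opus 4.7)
The plan is to put together two computable pieces already provided by Proposition \ref{prop:dyadicsubbase}: one giving us $\overline{S}(p)$ as a closed set, and one giving us $S(p)$ as an open set. Since $\iota : \mathbf{3}^* \hookrightarrow \T^*  \hookrightarrow \T^\omega$ is computable and an element of $\mathbf{3}^*$ carries a known length, both precompositions make sense. The task then reduces to getting the overt/$\mathcal{V}$-information on $\overline{S}(p)$, which is where properness enters.

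First I would observe that precomposing the second half of Proposition \ref{prop:dyadicsubbase} with $w \mapsto w\bot^\omega$ yields the computable map $p \mapsto \overline{S}(p) : \mathbf{3}^* \to \mathcal{A}(\mathbf{X})$. This is the easy half. For the $\mathcal{V}(\mathbf{X})$-half, I would start from the other clause of Proposition \ref{prop:dyadicsubbase}, giving the computable map $p \mapsto S(p) : \mathbf{3}^* \to \mathcal{O}(\mathbf{X})$. Composing this with the standard computable closure map $\textrm{cl} : \mathcal{O}(\mathbf{X}) \to \mathcal{V}(\mathbf{X})$ (the fact that open sets are computably overt via their closures is recalled e.g.\ in \cite{pauly-synthetic}) yields a computable map $p \mapsto \textrm{cl}\, S(p) : \mathbf{3}^* \to \mathcal{V}(\mathbf{X})$.

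Now I would invoke properness: by definition of a proper dyadic subbase, $\textrm{cl}\, S(p) = \overline{S}(p)$ for every $p \in \T^*$, so in particular for every $p \in \mathbf{3}^*$. Thus the map of the previous paragraph is $p \mapsto \overline{S}(p) : \mathbf{3}^* \to \mathcal{V}(\mathbf{X})$, and pairing it with the $\mathcal{A}(\mathbf{X})$-version gives the claimed computable map into $\mathcal{A}(\mathbf{X}) \wedge \mathcal{V}(\mathbf{X})$.

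There is no real obstacle here; everything is assembly of already-stated facts. The only point worth being careful about is that properness is what makes the two pieces of information (the closed-set information coming from the complements $S_{k,1-p(k)}^C$, and the overt information coming from the open interior $S(p)$) agree on the same set: without properness one merely has $\textrm{cl}\, S(p) \subseteq \overline{S}(p)$, so the overt data would describe a possibly smaller set than the closed data.
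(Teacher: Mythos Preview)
Your proof is correct and follows essentially the same route as the paper's: obtain the $\mathcal{A}(\mathbf{X})$-information from Proposition~\ref{prop:dyadicsubbase} via the inclusion $\mathbf{3}^* \hookrightarrow \T^\omega$, obtain the $\mathcal{V}(\mathbf{X})$-information by composing $p \mapsto S(p)$ with the computable closure map, and invoke properness to identify $\textrm{cl}\,S(p)$ with $\overline{S}(p)$. Your write-up is in fact slightly more explicit than the paper's about the final pairing step and about why properness is indispensable.
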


\begin{definition}
\label{def:excase}
For a dyadic subbase $S$ of a space $\X$,
$p \in \T^{\omega}$ and $n \in \N$,
we define $\exS^n(p) \subseteq X$
and $\exbarS^n(p) \subseteq X$ as follows.
\begin{alignat*}{2}
\exS^n(p) &= \bigcap_{k < n}S_{k,p(k)}, \\
\exbarS^n(p) &= \bigcap_{k < n}\cl\, S_{k,p(k)}.
\end{alignat*}
Note that
$\cl\, S_{n,\bot} = S_{n,\bot}$.
For $e \in \T^*$, we define $\exS(e) = \exS^{|e|}(e)$ and  $\exbarS(e) = \exbarS^{|e|}(e)$.  We have
\begin{alignat}{2}
x\in \exS(e)&\  \  \Leftrightarrow\  \   \varphi_S(x)_{\leq |e|} = e,& \label{eq:bars}\\
x\in \exbarS(e) &\  \  \Leftrightarrow\  \   \varphi_S(x)_{\leq |e|} \preceq  e.&\label{eq:barcs}
\end{alignat}
\end{definition}

The sets $\exS(e)$ may fail be to be either open or closed, but are guaranteed to be $\boldd^0_2$. While each $\exbarS(e)$ is a closed set, even for a computable dyadic subbase this does not yield an effective statement in general: The sets $\cl\, S_{k,0}$, $\cl\, S_{k,1}$ will be available as elements of $\mathcal{V}(\mathbf{X})$, whereas $S_{n,\bot}$ is available as an element of $\mathcal{A}(\mathbf{X})$
but not as an element of $\mathcal{V}(\mathbf{X})$ in general because finite intersection is not a continuous operation on $\mathcal{V}(\mathbf{X})$. However, if $S$ is proper, then $\cl\, S_{k,0} = S_{k,1}^C$ and $\cl\, S_{k,1} = S_{k,0}^C$ -- i.e.~all component sets of $\exbarS(e)$ are available as elements of $\mathcal{A}(\X)$, and this space is effectively closed under intersection. We thus find:

\begin{observation}
\label{obs:effectivelyproper}
Let $S$ be a proper computable dyadic subbase of $\X$. Then $\exbarS : \mathbf{3}^* \to \mathcal{A}(\X)$ is computable.
\end{observation}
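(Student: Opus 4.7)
The plan is to unpack the definition $\exbarS^n(e) = \bigcap_{k<n} \cl\, S_{k,e(k)}$ and verify that each factor can be computed as an element of $\mathcal{A}(\X)$ uniformly from the inputs $n, e$, after which the conclusion follows from the fact that finite intersection is computable on $\mathcal{A}(\X)$.

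First I would observe that the input regime is benign: since $e \in \mathbf{3}^\omega$ (not merely $\T^\omega$), the value $e(k) \in \{0,1,\bot\}$ is decidable, and by hypothesis $n$ is given as a natural number, so the algorithm can enter a finite loop over $k < n$ and perform a case distinction on $e(k)$. For each such $k$, I would then produce $\cl\, S_{k,e(k)} \in \mathcal{A}(\X)$ as follows. If $e(k) = 0$, properness of $S$ gives $\cl\, S_{k,0} = X \setminus S_{k,1}$, which is computable as a closed set because $(k,1) \mapsto S_{k,1} : \mathbb{N} \times \2 \to \mathcal{O}(\X)$ is computable by assumption and complementation $^C : \mathcal{O}(\X) \to \mathcal{A}(\X)$ is computable. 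The case $e(k) = 1$ is symmetric. If $e(k) = \bot$, then $\cl\, S_{k,\bot} = S_{k,\bot} = X \setminus (S_{k,0} \cup S_{k,1})$, again computable in $\mathcal{A}(\X)$ since finite union is computable on $\mathcal{O}(\X)$ and complementation then lands in $\mathcal{A}(\X)$.

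Having produced a finite list $(\cl\, S_{k,e(k)})_{k < n}$ of elements of $\mathcal{A}(\X)$, I would conclude by applying the computable finite intersection operation $\bigcap : \mathcal{A}(\X)^* \to \mathcal{A}(\X)$ (a standard fact, e.g.~\cite{pauly-synthetic}, and already invoked in Proposition~\ref{prop:dyadicsubbase}). This yields $\exbarS^n(e) \in \mathcal{A}(\X)$ uniformly in $(n,e)$, which is exactly the claim.

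There is no real obstacle here beyond properly exploiting properness: without it, the sets $\cl\, S_{k,i}$ for $i \in \2$ would only be available in $\mathcal{V}(\X)$ rather than $\mathcal{A}(\X)$, and then the finite intersection step would fail to be continuous. The preceding paragraph in the text already isolates this as the essential point, so the proof reduces to recording these observations in the correct order.
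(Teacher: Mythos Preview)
Your argument is correct and matches the paper's own reasoning: the paragraph preceding the observation already records that properness yields $\cl\, S_{k,0} = S_{k,1}^C$ and $\cl\, S_{k,1} = S_{k,0}^C$, while $S_{k,\bot}$ is directly available in $\mathcal{A}(\X)$, so all factors lie in $\mathcal{A}(\X)$ and one concludes by effective closure under (finite) intersection. You have simply made the case distinction on $e(k) \in \mathbf{3}$ explicit, which is a welcome clarification but not a different approach.
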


We can use this to characterize compactness similar to the characterization of overtness in Proposition \ref{prop:overtcharac} (and extending Corollary \ref{corr:compactsemicharac}):

\begin{proposition}
\label{prop:compactcharac}
Let $\mathbf{X}$ be compact and admit a proper computable dyadic subbase $S$.
%(S_{n,b})$.
Then $\X$ is computably compact iff $\{e \in \mathbf{3}^* \mid \exbarS(e) = \emptyset \}$ is recursively enumerable.
\begin{proof}
The forward-implication follows from Observation \ref{obs:effectivelyproper} and the basic characterization of computable compactness (see \cite{pauly-synthetic}). For the other direction, we need to show that we can semidecide $U = \X ?$ given $U \in \mathcal{O}(\mathbf{X})$. By Proposition \ref{prop:computablepds}, we can assume $U$ to be given as $U = \bigcup_{e \in I} S(e)$ for some enumerated set $I \subseteq \mathbf{3}^*$. Since we assume $\X$ to be compact, if $U = \X$, then already $\X = S(e_0) \cup \ldots \cup S(e_k)$ for some finite $\{e_i \mid i \leq k\} \subseteq I$.

Now from any $e_0, \ldots, e_k \in \mathbf{3}^*$ we can compute $d_0,\ldots,d_{l} \in \mathbf{3}^*$ such that $\X \setminus \left (S(e_0) \cup \ldots \cup S(e_k) \right ) = \exbarS(d_0) \cup \ldots \cup \exbarS(d_l)$. Thus, by semideciding whether $\exbarS(d_j) = \emptyset$ for all $j \leq l$, we can semidecide whether  $\X = S(e_0) \cup \ldots \cup S(e_k)$.
\end{proof}
\end{proposition}

\subsection{Existence of proper computable dyadic subbases}
\label{sec:existenceproper}

In \cite[Section 3.1]{tsukamoto}, \name{Tsukamoto} had shown that any separable metric space admits a proper dyadic subbase.
In this subsection, we effectivize the proof to show that every computably compact computable metric space admits a proper computable dyadic subbase.

We recall the definition of a computable metric space:

\begin{definition}
A computable metric space $(X,d,\alpha)$ is a separable metric space $(X,d)$ with metric $d : X \times X \to \mathbb{R}$ and a dense sequence $\alpha : \N \to X$ such that $d \circ (\alpha \times \alpha ) : \N^2 \to \Real$ is a computable double sequence of real numbers.\\

A computable metric space is turned into a represented space via the Cauchy representation $\delta_X :\subseteq \N^\N \to X$ defined by $\delta_X ( p) := x$ iff $\forall n \ d(\alpha(p(n)),x) < 2^{-n}$. A computable Polish space is a represented space induced by a complete computable metric space.
\end{definition}

We point out that since a computable metric space has by definition a computable dense sequence, it is always computably overt. In particular, $\cl : \mathcal{O}(\X) \to \mathcal{V}(\X)$ is computable for any computable metric space $\X$.

Next, we gather some results and definitions which we will require for the existence proof.

\begin{theorem}(Computable Baire category theorem \cite{brattka7})
Let $\X$ be a computable Polish space.
There exists a computable operation $\Delta :\subseteq \A(\X)^\N \times \O(\X) \mto  \X$ such that,  for any sequence $(A_n)_{n \in \N}$ of closed nowhere dense subsets of $X$ and a non-empty open subset $I$,
$\Delta((A_n)_{n \in \N}, I) \subseteq I \setminus \cup_{n=0}^{\infty} A_n$.
\end{theorem}

\begin{definition}
  $c \in \mathbb{R}$ is a local maximum of a continuous function $f : \mathbf{X} \to \mathbb{R}$ if $c$ is the maximum value of $f|_V$ for some open subset $V$.  Local maximum and local minimum values are called local extrema.
We denote by $\extr(f)$ the set of local extrema of $f$.
\end{definition}

For a function $f: X \to \Real$ and a real number $c$,  let
$U^0(f, c) = \{x \in X : f(x) < c\}$ and
$U^1(f, c) = \{x \in X : f(x) > c\}$. Note that $U^b : \mathcal{C}(\X,\Real) \times \Real \to \mathcal{O}(\X)$ is a computable function. While $U^0(f,c)$ and $U^1(f,c)$ are clearly always disjoint, in general they do not have to be exteriors of each other. However, we can establish:

\begin{lemma}\label{lemma0}
If $c \notin \extr(f)$, then $U^0(f, c)$ and $U^1(f, c)$  are exteriors of each other.
\end{lemma}

\begin{proof}
  Suppose that $c$ is not a local extremum of $f$ and $f(x) = c$.
  Then, for any $V \ni x$, $V \cap U^1(f, c) \ne \emptyset$ and
  $V \cap U^0(f, c) \ne \emptyset$.
\end{proof}

Let $B(a, r)$ and $\bar{B}(a,r)$ denote the open and respectively closed ball of  center $a$ and radius $r$. For $A \subseteq \X$, $f : \X \to \mathbb{R}$, $a \in \X$ and $r \in \mathbb{R}$, let $\bar{M}_{A,f}(a,r)$ be the maximum value of $f$ in $\bar{B}(a, r) \cap A$ and $M_{A,f}(a,r)$ be the supremum value of $f$ in $B(a, r) \cap A$. Let $D_{A,f}(a, r) = \{x : M_{A,f}(a, 2r) \leq x \leq \bar{M}_{A,f}(a, r) \}$.

\begin{lemma}
\label{lem:mmd}
The following functions are computable:
\begin{enumerate}
\item $(A,f,a,r) \mapsto \bar{M}_{A,f}(a,r) : \mathcal{K}(\X) \times \mathcal{C}(\X,\mathbb{R}) \times \X \times \mathbb{R} \to \mathbb{R}_>$,
\item $(A,f,a,r) \mapsto M_{A,f}(a,r) : \mathcal{V}(\X) \times \mathcal{C}(\X,\mathbb{R}) \times \X \times \mathbb{R} \to \mathbb{R}_<$,
\item $(A,f,a,r) \mapsto D_{A,f}(a, r) : (\mathcal{V}(\X) \wedge \mathcal{K}(\X))  \times \mathcal{C}(\X,\mathbb{R}) \times \X \times \mathbb{R} \to \mathcal{A}(\mathbb{R})$.
\end{enumerate}
\begin{proof}
\begin{enumerate}
\item From $a$, $r$ we can compute $\bar{B}(a,r) \in \mathcal{A}(\X)$. The intersection of a compact and a closed set is computable as a compact set. The maximum of a real-valued continuous function on a compact set is approximable from above, i.e.~computable as a point in $\mathbb{R}_>$.
\item The map $(U, A) \mapsto \cl (A \cap U) : \mathcal{O}(\X) \times \mathcal{V}(\X) \to \mathcal{V}(\X)$ is computable. We can thus compute $\cl\,  (B(a, r) \cap A) \in \mathcal{V}(\X)$. Taking the closure does not impact the supremum of a continuous function. The supremum of a real-valued continuous function on an overt set is approximable from below, i.e.~computable as a point in $\mathbb{R}_<$.
\item By combining $(1)$ and $(2)$, and noting that $x \mapsto \{y \mid y \leq x\} : \mathbb{R}_> \to \mathcal{A}(\mathbb{R})$ and $x \mapsto \{y \mid x \leq y\} : \mathbb{R}_< \to \mathcal{A}(\mathbb{R})$ are computable.
\end{enumerate}
\end{proof}
\end{lemma}

\begin{lemma}
\label{lem:extrfcharac}
$\extr(f|_{A}) = \bigcup_{n, k \in \N} D_{A,f}(\alpha(n), 2^{-k}) \cup D_{A,-f}(\alpha(n), 2^{-k})$.
\begin{proof}
Since $\bar{B}(a, r) \subseteq B(a, 2r) $, we find that $\bar{M}_{A,f}(a, r) \leq M_{A,f}(a, 2r)$. Thus, $D_{A,f}(a, r)$ is either empty, or the singleton $\{c\}$ where $c = \bar{M}_{A,f}(a, r) = M_{A,f}(a, 2r) = \max_{x \in B(a,2r) \cap A} f(x)$. In either case, we see that $D_{A,f}(a, r) \subseteq \extr(f|_{A})$. The same argument works for $D_{A,-f}(a, r)$ by exchanging minima and maxima.

Conversely, let $c \in \extr(f|_{A})$. By moving to $-f$ if necessary, we can assume $c$ to be a local maximum. This means there is some open $V$ such that $c = \max_{x \in A \cap V} f(x)$. Pick some $x_0 \in A \cap V$ with $f(x_0) = c$. Since we are working in a separable metric space, there are $n ,k \in \mathbb{N}$ with $x_0 \in B(\alpha(n),2^{-k}) \subseteq B(\alpha(n),2^{-k+1}) \subseteq V$. We then find that $c = \bar{M}_{A,f}(\alpha(n), 2^{-k}) = M_{A,f}(\alpha(n), 2^{-k+1})$, hence $D_{A,f}(\alpha(n),2^{-k}) = \{c\}$.
\end{proof}
\end{lemma}

\begin{lemma}\label{lemma1}
  Let $(X,d,\alpha)$ be a computable metric space.
There exists a computable operation $\Gamma :\subseteq \left ((\K(\X) \wedge  \V(\X)) \times
\C(\X, \Real)\right )^* \times \O(\X)  \mto \Real$ which returns a real in
$I \setminus\, \left (\bigcup_{i \leq n} \extr(f_i|_{A_i}) \right )$ to input $((A_0, f_0),\ldots,(A_n,f_n), I)$ if $I$ is nonempty.
\end{lemma}

\begin{proof}
By Lemma \ref{lem:mmd} (3) and Lemma \ref{lem:extrfcharac}, we can compute each $\extr(f_i|_{A_i})$ as the union of a sequence of nowhere-dense closed sets. The computable Baire category theorem then lets us find a point avoiding their union.
\end{proof}

\begin{theorem}
\label{theo:dyadicexists}
Every CCCMS (computably compact computable metric space) $(X, d, \alpha)$ admits a proper computable dyadic subbase.
\end{theorem}
\begin{proof}
Let $(I_n)$ be a computable sequence of  open rational intervals which forms a base of $\Real^{\geq 0}$.   Let $\langle \pi_1, \pi_2 \rangle: \N \to \N \times \N$ be a standard bijection. Let $f_n(x) = d(\alpha(\pi_1(n)), x)$.
Then, for any choice of $c(n) \in I_{\pi_2(n)}$, we obtain a dyadic subbase of $\X$ by $S(n,b) = U^b(f_n, c(n))$. If $c$ is computable, then $S$ is even a computable dyadic subbase.

By Proposition \ref{prop:propercharac}, for $S$ to be proper, we need that
$\textrm{cl}\,S(e)=\bar{S}(e)$ for every $e \in \{0,1\}^*$. We will choose $c$ inductively to ensure this.

For a sequence $(c(k))_{k < n}$ of length $n$ and $e \in \{0,1\}^n$,
we define $S(n, (c(k))_{k < n}, e) \in \O(\X)$ and $\bar{S}(n, (c(k))_{k < n}, e) \in \A(\X)$
as  $S(n, (c(k))_{k < n}, e) = \cap_{i<n} U^{e(i)}(f_i, c(i))$ and
$\bar{S}(n, (c(k))_{k < n}, e) = \cap_{i<n} X \setminus\, U^{1-e(i)}(f_i, c(i))$. Since $\X$ is computably compact, we also obtain $\bar{S}(n, (c(k))_{k < n}, e) \in \K(\X)$.

First, we choose $c(0) \in I_0$ avoiding the local extrema of $f_0$, i.e.~$c(0) \in \Gamma((\X, f_0), I_0)$ via Lemma \ref{lemma1}.

Suppose that we have defined $c(0)$ to $c(n-1)$ so that
$\bar{S}(n, (c(k))_{k < n}, e) = \cl\,  S(n, (c(k))_{k < n}, e)$ for every $e \in \{0,1\}^n$.
Let $S(e) = S(n, (c(k))_{k < n}, e)$ and $\bar{S}(e) = \bar{S}(n, (c(k))_{k < n}, e)$.  We have $\bar{S}(e) = \cl\, S(e) \in \K(\X) \wedge \V(\X)$.
We choose $c(n) \in \Gamma((\bar{S}(e), f_n))_{e \in \{0,1\}^n}), I_n)$ via Lemma \ref{lemma1}, i.e.~avoiding any local extrema of $f_n$ on any $\bar{S}(e)$.

We prove $\bar{S}(n+1, (c(k))_{k \leq n}, e') = \cl\,  S(n+1, (c(k))_{k \leq n}, e')$ for every $e' \in \{0,1\}^{n+1}$.  We show it for the case $e' = e0$.
we need to show
\[
\cl (S(e) \cap U^{0}(f_n, c(n))) = \bar{S}(e) \setminus\, U^{1}(f_n, c(n)).
\]
First, we have
\[
  \cl (S(e) \cap U^{0}(f_n, c(n))) = \cl_{\cl\, S(e)}(\cl\, S(e) \cap U^0(f_n, c(n)))
\]
because $\cl\, (A \cap B) = \cl_{\cl\,{A}} ((\cl\, A) \cap B)$ for open sets $A, B$.
Since $c(n)$ is not a local extremum of $f_n$ on $\cl\, S(e) = \bar{S}(e)$, by Lemma \ref{lemma0},
\[
\cl_{\cl\, S(e)}(\cl\, S(e) \cap U^0(f_n, c(n))) = \cl\, S(e) \setminus U^1(f_n, c(n)) = \bar{S}(e) \setminus\, U^{1}(f_n, c(n)).
\]

With the dyadic subbase $S$ thus defined, $S_{n,0} = \{x \mid d(x, \alpha(\pi_1(n))) < c(n)\}$ and
$S_{n,1} = \{x \mid d(x, \alpha(\pi_1(n))) > c(n)\}$
are computable open sets of $\X$ and therefore $F: I \mapsto \bigcup_{e \in I} S(e): \mathcal{O}(\mathbf{3}^*) \to \mathcal{O}(\mathbf{X})$ is computable.
On the other hand,  for a base $B(\alpha(n_1), r)$ for $n_1 \in \N$ and $r \in \Q^{>0}$, take an $n_2$ such that $I_{n_2} \subset [0, r]$ and
define $n = \langle n_1, n_2 \rangle$. Then,
$S_{n,0} \subset B(\alpha(n_1), r)$.  Therefore, the multi-valued inverse of $F$ is computable.
Thus, by Proposition \ref{prop:computablepds},
$S$ is a computable dyadic subbase of $\X$.
\end{proof}

The proof of Theorem \ref{theo:dyadicexists} is fully uniform. We could have constructed the represented spaces of all compact separable metric spaces, similar to what is suggested in \cite{rettinger2}. Then we would have stated that we can compute a proper dyadic subbase from a given compact separable metric space.

Given that the original proof from \cite[Section 3.1]{tsukamoto} does not require compactness, of course the question arises why we require it for Theorem \ref{theo:dyadicexists}. Computable compactness of $\X$ is used to obtain the $S(e) \in \mathcal{K}(\X) \wedge \V(\X)$, not just in $\mathcal{A}(\X) \wedge \mathcal{V}(\X)$, which is needed to invoke Lemma \ref{lemma1}. We will see in the following example that without compactness, there is no computable way to avoid local extrema. There might still be a way to salvage the construction by exploiting that we do not need to avoid local extrema of arbitrary functions, but of distances.

\begin{example}
There exists a computable function $f : \Baire \to [0,1]$ such that for every computable real $x \in [0,1]$ there exists some $n \in \mathbb{N}$ with $x = \max_{p \in \Baire} f(np)$.
\begin{proof}
We can describe a computable function $f : \Baire \to [0,1]$ by computably labelling the full countably branching tree $\mathbb{N}^*$ with closed rational intervals $(I_w)$ such that $|I_w| \leq 2^{-|w|+1}$ and $w \preceq u \Rightarrow I_u \subseteq I_w$. We label the root and all first-generation children by $[0,1]$. Inside the subtree below the $n$-th first generation child, we try to compute the $n$-th computable real number $x_n$ (which might be undefined).

Once we have found the label $I_{nw}$ to some vertex, we compute the labels for the successors as follows: To obtain $I_{nwk}$, we continue the computation of $x_n$ for $k$ more steps. If this yields an approximation $q$ of $x_n$ of precision $2^{-|nwk|}$, and $[q - 2^{-|nwk|}, q + 2^{-|nwk|}] \cap I_{nw}$ is non-empty, then we set $I_{nwk} := [q - 2^{-|nwk|}, q + 2^{-|nwk|}] \cap I_{nw}$. Otherwise, we set $I_{nwk} = [\min I_{nw}, \min I_{nw}]$.

This ensures that if $p$ is an upper bound for the computation time of $x_n$, then $f(np) = x_n$. Otherwise, $f(np) < x_n$.
\end{proof}
\end{example}

\begin{corollary}
The map $\operatorname{AvoidExtrema} : \mathcal{C}(\Baire, [0,1]) \mto [0,1]$ defined by $$\operatorname{AvoidExtrema}(f) = [0,1] \setminus \{\max_{p \in \Baire} f(wp) \mid w \in \mathbb{N}^*\}$$ is not computable.
\end{corollary}

\section{Main result}
\label{sec:mainresult}
In this section, we show that for every computably compact computable metric space (CCCMS) $\mathbf{X}$ there is a faithful $\T^\omega$-representation of $\mathcal{K}(\mathbf{X})$.

%\subsection{Preparation for the main theorem}
%\subsection{Another $T^\omega$-representation derived from a proper dyadic subbase.}
%\subsection{Domain representations derived from dyadic subbases}
\subsection{The poset $\widehat{K}_S$}

Our construction utilizes
a sub-poset $\widehat{K}_S$ of $\tau(\T^*)$, which was studied in \cite{tsuiki4}.
%a Scott domain $\widehat{D}_S$ obtained by taking the ideal completion of
%$\widehat{K}_S$, which was studied in \cite{tsuiki4}.

We say that $e \in \T^*$ is an \emph{immediate successor} of $d \in \T^*$
(denoted by $d \prec^1 e$)
if $d \prec e$ and $e$ has one more digit than $d$.
We say that a subset $E$ of $\T^*$  is \emph{tree-like} if
$\forall e \in E\  e \ne \epsilon \to \exists d \in E\ d \prec^1 e$.
%Every infinite path in $\widehat{K}_{S}$ has its limit $p \in \T^\omega$ in $\widehat{L}_{S}$.
For a tree-like set $E \subseteq \T^*$ and $e \in E$,
we write $\mathrm{succ}_E(e)$ for the set of immediate successors of $e$ in $E$.
A tree-like set $E$ is called \emph{finitely branching} if
$\mathrm{succ}_E(e)$ is finite for each $e \in E$.

For a tree-like subset $E$,
we call an infinite sequence $ \epsilon = e_0\prec^1 e_1 \prec^1 \ldots$ in $E$ an \emph{infinite path} in $E$.
Every infinite path $P= (e_i)_{i \in \N}$  has a limit $\sqcup_{i \in \N} \iota(e_i)$,
which is an element of $\T^\omega \setminus \T^*$.
We define $L(E) \subseteq \T^\omega$ as the set of limits of infinite paths in $E$ (i.e.~$L(E)$ generalizes the notion of body of a tree).

% Suppose that  $S$ is a proper dyadic subbase of a Hausdorff space $X$.
% For a finite-branching tree-like subset $E$ of $\T^*$, we define subsets $(E)_S$ and
% $[E]_S$ as follows
% \begin{align}
% (E)_S &= \bigcap_{n \in \mathbb{N}} \bigcup_{e \in \bo{E}{n} } S(e),\label{eq:es1}\\
% [E]_S &= \bigcap_{n \in \mathbb{N}} \bigcup_{e \in \bo{E}{n} } \overline{S}(e).\label{eq:es2}
% \end{align}
% $[E]_S$ is a closed set and we consider that $[E] \in \A(\Cantor)$ for a tree $E$
% in Section \ref{sec:cantor} is a special case of $[E]_S$.
% For an infinite path $P = (e_i)_{i \in \N}$ with the limit $p$,
% $(P)_S = S(p)$ and $[P]_S = \bar{S}(p)$.

%The closed set $\bar{S}(E)$ for a tree-like subset $E$ plays the role of  $[E]$ for a tree $E$ in the Cantor case.

The following proposition gives a sufficient condition for $\bar{S}(p)$ to become a singleton.
\begin{proposition}[Proposition 2.5 of \cite{tsuiki4}]\label{prop:rho}
  Suppose that $S$ is a proper dyadic subbase of a Hausdorff space $X$.
\begin{itemize}
\item[(1)]  If $x \ne y \in X$, then $x \in S_{n, a}$ and $y \in S_{n, 1-a}$ for some $n$ and $a$.

  \item[(2)] If $x \in X$ and $p \succ \varphi_S(x)$, then $S(p) = \emptyset$.

\item[(3)] If $x \in X$ and $p \succeq \varphi_S(x)$, then $\bar{S}(p) = \{x\}$.
Therefore,  another $\T^\omega$-representation  $\rho_S$ of $X$ is derived from $S$.
The domain of $\rho_S$ is $\uparrow\!\! \varphi_S(X)$ and $\rho_S(p) = x$ if and only if $ \bar{S}(p) \ni x$,
or equivalently, $p \succeq \varphi_S(x)$.
\end{itemize}
\begin{proof}
\begin{enumerate}
\item Since $X$ is Hausdorff and $S$ is proper, there is $e \in \T^*$ such that
$x \in S(e)$ and $y \not \in cl\, S(e) = \bar{S}(e)$.  Therefore,  $e \preceq \varphi_S(x)$ and
$e \not \comp \varphi_S(y)$ by equations (\ref{eq:s}) and (\ref{eq:cs}).
Thus, $\varphi_S(x) \not \,\comp \varphi_S(y)$.  Therefore,
$\varphi_S(x)(n) = 1- \varphi_S(y)(n)$ for some $n$.

\item  It holds for dyadic subbases in general.
First, we have $S(\varphi_S(x)) = \{x\}$.  For $i \in \dom(p) \setminus \dom(\varphi_S(x)), x \not \in S_{i,p(i)}$ holds.

\item  From (1), we have $\bar{S}(\varphi_S(x)) = \{x\}$.
We have $\bar{S}(p) \subseteq \bar{S}(\varphi_S(x)) = \{x\}$ from $p \succeq \varphi_S(x)$.
On the other hand, $\bar{S}(p) \ni x$ because $p \comp \varphi_S(x)$.
\end{enumerate}
\end{proof}
\end{proposition}

% Suppose that there exists a  monotonic  surjective map from $\2^*$  to  $\T^*$.
% Then, it maps a tree in $\2^*$ to a finite-branching tree-like subset $E$ of $\T^*$, which specifies
% a closed subset $\bar{S}(E)$.  In addition, it maps an infinite path in
% $\2^*$ to an infinite path in $\T^*$, which specifies a point of $X$ if the limit of the path is in
% $\uparrow\! \varphi_S(X)$ by Proposition \ref{prop:rho}.

% This suggests that if there exists such a map, then
% we can use it for the construction of a faithful representation
% by composing it with the $\T^\omega$ representation of trees from Section
% \ref{sec:cantor} and by restricting the domain to a suitable subset of $\T^\omega$.
% However, such a map does not exist
% because $(\T^*, \preceq)$ is not finite branching.  Therefore,
% we use a finite-branching subset of $\T^*$ which contains
% infinite paths with limits in $\uparrow\! \varphi_S(X)$.

\begin{definition}
Let $S$ be a dyadic subbase of a space $\X$.
 We define the poset $\widehat{K}_S \subseteq \T^*$ as
 \[
\widehat{K}_S = \{\bo{p}{\leq n}  \mid  \exists x \in X\ \varphi_S(x) \preceq p , n \in \N\}\,.
\]
\end{definition}

Every element of $\widehat{K}_S$ does not end with $\bot$ and thus belongs to $\tau(\T^*)$.  Therefore, $\widehat{K}_S$ is a poset.
It is a tree-like set by construction.  Since
$\bo{p}{\leq n} = p_{\leq m}$ for some $m$,
we can derive
\[
\widehat{K}_S = \{e \in \tau(\T^*) \mid \exbarS(e) \ne \emptyset\}
\]
from   Equation (\ref{eq:barcs}).
As is shown in Proposition 4.4 of \cite{tsuiki4},
$\widehat{K}_S$ is consistently complete (i.e., every pair of consistent elements have
their least upper bound) and $D(\widehat{K}_S) = \widehat{K}_S \cup L(\widehat{K}_S)$ forms a Scott sub-domain of $\T^\omega$.

\begin{figure}
\begin{center}
\begin{tikzpicture}{align=bottom}
\draw (0,0) node (e) {$\epsilon$};
\draw (-4,1) node (0) {$0$};
\draw (0,1) node (b1) {$\bot1$};
\draw (4,1) node (1) {$1$};

\draw (e) --(b1);
\draw (e) --(1);
\draw (e) --(0);

\draw (-2,2) node (01) {$01$};
\draw (0,2) node (b10) {$\bot 10$};
\draw (2,2) node (11) {$11$};
\draw (4,2) node (1b1) {$1\bot 1$};
%\draw (6,2) node (10) {$10$};

\draw (0) --(01);
\draw (b1) --(01);
\draw (b1) --(b10);
\draw (b1) --(11);
\draw (1) --(11);
\draw (1) --(1b1);
%\draw (1) --(10);

\draw (1,3) node (110) {};%{$110$};
\draw (2,3) node (11b1) {$11 \bot 1$};
\draw (3,3) node (111) {$111$};
\draw (4,3) node (1b10) {};%{$1\bot 10$};
\draw (5,3) node (101) {};%{$101$};

%\draw (11) --(110);
\draw (11) --(11b1);
\draw (11) --(111);
\draw (1b1) --(111);
%\draw (1b1) --(1b10);
%\draw (1b1) --(101);

\draw (1.5,4) node (a) {};%{$110$};
\draw (2,4) node (b) {};
\draw (2.5,4) node (c) {};
\draw (3,4) node (d) {};%{$1\bot 10$};
\draw (3.5,4) node (e) {};%{$101$};

%\draw (11b1) --(a);
%\draw (11b1) --(b);
\draw (11b1) --(2.5,3.5);
\draw (111) --(2.5,3.5);
\draw (111) --(3,3.5);
%\draw (111) --(e);

%\draw (2.25,5) node (a1) {};%{$110$};
\draw (2.5,5) node (b1) {};
\draw (2.75,5) node (c1) {};
\draw (3,5) node (d1) {};%{$1\bot 10$};
%\draw (3.25,5) node (e1) {};%{$101$};

%\draw (c) --(a1);
%\draw (c) --(b1);
%\draw (c) --(c1);
%\draw (d) --(c1);
%\draw (d) --(d1);
%\draw (d) --(e1);

\draw (2.5,3.5) --(2.5,4);
\draw (2.5,3.5) --(2.75,4);
\draw (3,3.5) --(2.75,4);

\draw (2.5,4) --(2.625,4.5);
\draw (2.75,4) --(2.625,4.5);
\draw (2.75,4) --(2.75,4.5);

\draw (2.65,4.75) node {...};

%%%%%%%%%%%%%%%%%%%%%%%%%
\draw (-1,3) node (010) {$010$};
\draw (0,3) node (b100) {$\bot 100$};
\draw (1,3) node (110) {$110$};

\draw (01) --(010);
\draw (b10) --(010);
\draw (b10) --(b100);
\draw (b10) --(110);
\draw (11) --(110);

\draw (010) --(-0.5,3.5);
\draw (b100) --(0   ,3.5);
\draw (b100) --(-0.5 ,3.5);
\draw (b100) --(0.5  ,3.5);
\draw (110) --(0.5,3.5);

\draw (-0.5,3.5) --(-0.25,4);
\draw (0,3.5) --(-0.25,4);
\draw (0,3.5) --(0,4);
\draw (0,3.5) --(0.25,4);
\draw (0.5,3.5) --(0.25   ,4);

\draw (-0.25,4) --(-0.125,4.5);
\draw (0,4) --(-0.125,4.5);
\draw (0,4) --(0,4.5);
\draw (0,4) --(0.125,4.5);
\draw (0.25,4) --(0.125   ,4.5);

\draw (0,4.75) node {...};
\draw (-3, 5)-- (3, 5);
\draw (0, 5.2) node (Lb){$\bot10^\omega$};
\draw (-0.5, 6) node (L0){$010^\omega$};
\draw (0.5, 6) node (L1){$110^\omega$};
\draw (Lb)--(L0);
\draw (Lb)--(L1);
\draw (2.65, 5.2) node (L11){$1^\omega$};

\end{tikzpicture}
\caption{Some part of $D(\widehat{K}_G)$. \label{fig:G}}
\end{center}
\end{figure}
\begin{example}
  For the Gray subbase $G$ in Example \ref{ex:gray}, Figure \ref{fig:G} shows
 a tree-like subset of $\widehat{K}_G$
that contains
all the paths with the limits $\bot 10^\omega$, $0 10^\omega$, $1 10^\omega$, and
$1^\omega = \varphi_G(1/3)$.
Note that $\uparrow \!\! \varphi_G(0)  = \{\bot 10^\omega, 0 10^\omega, 1 10^\omega\}$.
\end{example}

\begin{proposition}\label{prop:compact}
Suppose that $S$ is a proper dyadic subbase of a compact Hausdorff space $X$.
\begin{enumerate}
\item[(1)] $\{S(e) \mid e \in \widehat{K}_{S}\}$ is a base of $X$.

\item[(2)] $S(e) = \cup \{S(d) \mid d \in \mathrm{succ}_{\widehat{K}_{S}}(e)\}$ for $e \in \widehat{K}_{S}$.

\item[(3)] $\widehat{K}_{S}$ is finitely branching.

 \item[(4)]
$L(\widehat{K}_{S}) = \uparrow\!\! \varphi_S(X)$ and therefore
the domain of $\rho_S$ is $L(\widehat{K}_{S})$.  Moreover,
the map
$\varphi_S \circ \rho_S$ from $L(\widehat{K}_{S})$ to $\varphi_S(X)$ is continuous.
Therefore, $\varphi_S(X)$ is a retract of $L(\widehat{K}_{S})$.
It also means that
$X$ is homeomorphic to the set of minimal elements of $L(\widehat{K}_{S})$.
\end{enumerate}

\begin{proof}
\begin{enumerate}
\item[(1)] If $x \in S(d)$ for $d \in \T^*$, then $e = \bo{\varphi_S(x)}{\leq|d|}$ satisfies $e \in \widehat{K}_S$ and $x \in S(e) \subseteq S(d)$.

\item[(2)] Suppose that  $x \in S(e)$.  Let $d = \varphi_S(x)_{\leq |e|}$.  If $e \prec d$, then there is an element $e \prec^1 d' \preceq d$
such that $x \in S(d')$.  If $e = d$, then $d' = \bo{\varphi_S(x)}{\mathrm{level}(d)+1}$ satisfies $e \prec^1 d'$  and $x \in S(d')$.

\item[(3)] It is proved in Proposition 5.10 of \cite{tsuiki4} for a large class of spaces that contain
compact Hausdorff spaces.

\item[(4)]  It is Proposition 3.4 and Theorem 6.4 of \cite{tsuiki4}.
\end{enumerate}
\end{proof}
\end{proposition}

We explore the relationship between the two $\T^\omega$-representations
  $\varphi_S^{-1}$ and $\rho_S$.
As a part of Proposition \ref{prop:compact}(4),
we have seen $\varphi_S(X) \subseteq L(\widehat{K}_{S})$.
Actually,  $\varphi_S(x)$ is the limit of the path
$(\bo{\varphi_S(x)}{\leq n})_n$ and $\bo{\varphi_S(x)}{\leq n} \in  \widehat{K}_S$ for every $n \in \N$.
Thus, the inclusion map $\varphi_S(X) \subseteq L(\widehat{K}_{S})$
is the realizer of $id: (X, \varphi_S^{-1}) \to (X, \rho_S)$.   The following proposition says that
if $S$ is a proper computable dyadic subbase of a CCCMS $\X$,
then
$id: (X, \rho_S) \to (X, \varphi_S^{-1})$ is also computable and therefore
the two representations are computably isomorphic.

\begin{proposition}\label{prop:computable-retract}
  Suppose that $S$ is a proper computable dyadic subbase of a CCCMS $\X$.
  Then, the retraction $\varphi_S \circ \rho_S$ from $L(\widehat{K}_{S})$ to $\varphi_S(X)$ is  computable and therefore
$id: (X, \rho_S) \to (X, \varphi_S^{-1})$ is computable.
Thus,  $\rho_S$ is also a $\T^\omega$-representation of the represented space $\X$.

\end{proposition}
\begin{proof}
%  First, note that $\widehat{D}_{S} \subset \T^\omega$, which is the domain of $\rho_S$,
% is a co-r.e. set  by Proposition \ref {prop:compactcharac},
  Suppose that $p \in L(\widehat{K}_{S}) \subset \T^\omega$ is given.
  Proposition \ref{prop:rho} says that there is a unique $x$ such that $\varphi_S(x) \preceq p$
  and we want to compute the sequence   $q = \varphi_S(x)$ from $p$.
  Since $q \preceq p$,
  $q$ is obtained by replacing some digits in $p$ with $\bot$.
  Therefore,
  $$\{x\} = \bar{S}(p) = \cap_{i\in \dom(p)}  (S_{i,p(i)}\cup S_{i,\bot}).$$
  In order to obtain $q$, we need to detect all $i \in \dom(p)$ such that $x \in S_{i,p(i)}$.

  Since $X$ is regular and $S$ is proper,  if $x \in S_{i, p(i)}$, then
there is $n$ such that
  $S_{i, p(i)} \supseteq
\cl\, S(q_{\leq n}) = \bar{S}(q_{\leq n})$.
  We also have $\bar{S}(q_{\leq n}) \supseteq\bar{S}(p_{\leq n}) \ni x$.
  Therefore,  $x \in S_{i,p(i)}$ if and only if $\bar{S}(p_{\leq n}) \subseteq S_{i,p(i)}$ for some $n$.
  It is semi-decidable, because $\bar{S}(p_{\leq n})$ is in $\A(X)$, $S_{i,p(i)}$ is an open set, and
  $X$ is computably compact.
\end{proof}

\subsection{Preparation for the main theorem}

Suppose that  $S$ is a proper dyadic subbase of a space $X$.
For a finite-branching tree-like subset $E$ of $\T^*$, we define
$[E]_S$ as follows
\begin{align*}
%(E)_S &= \bigcap_{n \in \mathbb{N}} \bigcup_{e \in \bo{E}{n} } S(e),\label{eq:es1}\\
[E]_S &= \bigcap_{n \in \mathbb{N}} \bigcup_{e \in \bo{E}{n} } \overline{S}(e).\label{eq:es2}
\end{align*}%\marginpar{\color{red}\scriptsize We removed the definition of $(E)_S$ because we deleted the condition  $(E)_S = [E]_S$ from $\mathcal{N}$.}
$[E]_S$ is a closed set and we consider that $[T] \in \A(\Cantor)$ for a tree $T$
in Section \ref{sec:cantor} is a special case that $X = \Cantor$ and $S$ is the trivial proper dyadic subbase.

  \begin{proposition}\label{prop:bracket}
If  $S$ is a dyadic subbase of $X$ and $E$ is a finite-branching tree-like subset of $\T^*$, then
  $[E]_S = \cup \{\bar{S}(p) \mid p \in L(E)\}$.
\end{proposition}
  \begin{proof}
   Let $x \in [E]_S$.  We have $\forall n\ \exists e_n \in \bo{E}{n} (x \in \bar{S}(e_n))$.
  We choose such $e_n$ for each $n$ and consider the increasing sequence $(\sqcup_{i\leq n} e_i)_n$.
    Let $p$ be its limit.
    Since $\bar{S}(p) = \cap_n \bar{S}(e_n)$,
    we have $x \in \bar{S}(p)$.
    On the other hand, every $p \in L(E)$ is a limit of an infinite path $(e_n)_{n \in \N}$ in $E$,
    and $\bar{S}(p) \subseteq \bar{S}(e_n)$ for each $n \in \N$.
\end{proof}

  \begin{proposition}\label{prop:tree-like}
   Let $K$ be a finitely branching tree-like subset of $\T^*$.
   There is a strictly increasing function $f : \mathbb{N} \to \mathbb{N}$
and a partial surjective monotonic map $\beta :\subseteq  \2^* \to K$ such that
(1) $\mathrm{dom}(\beta)$ is a tree,  (2) $\beta$ maps
$\2^{f(n)} \cap \mathrm{dom}(\beta)$ to $\bo{K}{n}$, and
(3) if $\beta(w) \prec^1 e$ in $K$, then
there exists a unique $v \sqsupset w$ such that $\beta(v) = e$.
Such a $\beta$ maps a tree in $\dom(\beta)$ to a tree-like subset of $K$
% and for every tree-like subset $F \subset K$,  $\beta^{-1}(F)$ is a tree,
and injectively maps an infinite path in $\dom(\beta)$ to an infinite path in $K$,
 \end{proposition}
 \begin{proof}
 For each level $n$, let $h(n)$ be the maximum of $\lceil\log{|\mathrm{succ}_K(e)|}\rceil$ for $e \in \bo{K}{n}$.
    We define $f$ inductively as $f(0) = 0$ and
    $f(n+1) = f(n) + h(n)$.
For each $e$,  we fix an isomorphism $h_e$ from a subset of
$\2^{h(n)}$ to $\mathrm{succ}(e)$.
For example,  we order $\mathrm{succ}(e) \subset \mathbf{3}^*$
according to the lexicographic ordering of $\mathbf{3}^*$, and use the
binary notation in $\2^{h(n)}$.
Then, we define
$\beta(w)$ for $w \in \2^{f(n)}$ inductively as
$\beta(wv) = h_{\beta(w)}(v)$ for $w \in \2^{f(n)}$ and $v \in \2^{h(n)}$, when it is defined.  We extend $\beta$ to a partial map defined on a lower-closed subset of $\2^*$
by defining $\beta(w) = \beta(w_{\leq f(n)})$ if $f(n) \leq |w| < f(n+1)$.
\end{proof}

If we apply this proposition to the case $X$ is a compact Hausdorff space and
$K = \widehat{K}_{S}$,  each tree $T \subseteq \mathrm{dom}(\beta)$
is mapped by $\beta$ to a tree-like subset of $\widehat{K}_{S}$, and thus
\textsf{}specifies a closed subset $[\beta(T)]_S$ of $X$. % defined as Equation (\ref{eq:es2}).
This correspondence is surjective
because, given a closed subset $A$,  $T = \bigcup_{x \in A}
\beta^{-1}(\{{\varphi_S(x)_{\leq n}}: n \in \N\})$ is a tree in $\mathrm{dom}(\beta)$
such that
$[\beta(T)]_S = A$.
Note that $[\beta(T)]_S$ is determined by $L(\beta(T))$ by Proposition \ref{prop:bracket}.
Since $\beta$ surjectively maps an infinite path in $T$ to an infinite path in $\beta(T)$,
$L(\beta(T))$ is determined by $L(T)$, and is again determined by $\textrm{Prune}(T)$.
In this way, a closed subset of $X$ is determined by a pruned tree.
On the other hand, each infinite path in $\mathrm{dom}(\beta)$ is mapped to an infinite path in
$\widehat{K}_{S}$ and therefore  specifies
 a point of $X$ by Proposition  \ref{prop:compact}(4).
Combining these with the $\T^\omega$-representation of pruned trees in Section \ref{sec:cantor},
we {\bf almost} have matching representations of $\X$ and ${\cal K}(\X)$.

The obstacle is that the function $f$ in Proposition \ref{prop:tree-like} is not computable in general. This is because the cardinality of
$\mathrm{succ}(e)_{\widehat{K}_{S}}$ is not computable, and therefore
the above representation of $X$ is not computably isomorphic to the represented space  $\X_S$.
To solve this, we consider a computable sub-poset $H \subset \T^*$
such that $\widehat{K}_{S} \subseteq \tau(H)$.
%that contains $\widehat{K}_{S}$ if we identify $\preceq$-equivalent elements.
%such that \subseteq H \subseteq \T^*$ if we identify $e \in H$ with $\tau(e)$.

\begin{lemma}\label{lemma:H}
Let $S$ be a proper computable dyadic subbase of a CCCMS $\X$.
There is a computable monotonic function $g : \mathbb{N} \to \mathbb{N}$ such that
the set $H \subset \T^*$ defined as
$$\bo{H}{n} = \{e \in \bo{\T^*}{n} \mid |e| = g(n)\}$$
satisfies $\widehat{K}_S \subseteq \tau(H)$.
%$|e| \leq g(\mathrm{level}(e))$ for $e \in \widehat{K}_S$.
$H$ is decidable as a subspace of $\mathbf{3}^*$.
\label{lem:H}

\begin{proof}
We inductively define ${\bo{H}{n}} \subset {\bo{\T^*}{n}}$.
%define and compute the set $H$ %which contains $\widehat{K}_{S}$ by
Let $g(0) = 0$ and $\bo{H}{0} = \{\epsilon\}$.
Suppose that ${\bo{H}{n}}$ is defined.
For each element $d \in {\bo{H}{n}}$, we perform the following procedure.

By dove-tailing, search for some $ k_d \geq |d|$ such that $\exbarS^{k_d}(d)$ is empty.
Recall that we identify $d \in \T^*$ with $d\bot^\omega$ for the purpose of defining $\exbarS^k(d)$ for $k > |d|$.
Since $\mathrm{succ}_{\widehat{K}_{S}}(d)$ is finite, let $m$ be the maximal length
of elements in $\mathrm{succ}_{\widehat{K}_{S}}(d)$.  Then,
we have $\exbarS^k(d) = \emptyset$ for $k \geq m$.
Therefore, we can find a suitable candidate for $k_d$
by Proposition \ref{prop:compactcharac}.
We set $g(n+1)$ to be $\max \{k_d \mid d \in {\bo{H}{n}}\}$ and
${\bo{H}{n+1}} = \{d \in {\bo{\T^*}{n+1}} \mid |d| = g(n+1)\}$.

We show ${\bo{(\widehat{K}_S)}{n}} \subseteq \tau(H)$ by induction on $n$.
We have ${\bo{(\widehat{K}_S)}{0}} = \{\epsilon\} \subseteq \tau(H)$.
Let $e \in \bo{(\widehat{K}_S)}{n+1}$ and $d = \bo{e}{n}$.
Since $d \in \bo{(\widehat{K}_S)}{n}$, $d \in \tau(H)$ by induction hypothesis.
Therefore, $|d| \leq g(n)$ and $d' = d\bot^k \in H$ for $k = g(n) - |d|$.
We have $\exbarS^{|e|}(e) \ne \emptyset$ because $e \in \widehat{K}_S$,
and $\exbarS^{|e|}(e) \subset \exbarS^{|e|-1}(e) = \exbarS^{|e|-1}(d')$.
Therefore, $k_{d'} \geq |e|$ and thus $g(n+1) \geq |e|$.
Therefore, $e\bot^k \in H$ for $k = g(n+1) - |e|$.

$H$ is decidable because,
for a $\{0,1,\bot\}$-sequence $e$, $e \in H$ if and only if $|e| = g(\mathrm{level}(e))$.

\end{proof}
\end{lemma}

In the following, we fix such a function
$g : \mathbb{N} \to \mathbb{N}$ and define $H$ as in Lemma \ref{lemma:H}.
We identify $e \in \widehat{K}_S$ with $e\bot^{g(\mathrm{level}(e)) - |e|}
\in H$, and
we consider that $\widehat{K}_S \subseteq H$.
Each element of  $\bo{H}{n}$ contains $n$ digits and $g(n) - n$ copies of $\bot$.  Therefore, $|\bo{H}{n}| = {}_{g(n)}C_{n}2^n$.
In addition,  every  $d \in  \bo{H}{n}$ has the same number of successors
$|\mathrm{succ}_H(d)| = 2(g(n+1) - n)$.
Since $H$ is finitely branching and tree-like,
by applying Proposition \ref{prop:tree-like}, we have a strictly increasing function $f$
and a partial surjective map $\beta:\subseteq \2^* \to H$ such that
$\2^{f(n)}$ is mapped to $\bo{H}{n}$.
In this case, $f$ is the function $f(n) = \Sigma_{k=0}^{n-1} {\lceil\log{|g(k+1)-k}|\rceil}$.
$\beta$ maps a tree in $\dom(\beta) \subseteq \2^*$ to a tree-like subset of  $H$,
and injectively maps an infinite path in $\dom(\beta) $ to an infinite path in $H$.
By using some bijection $\mathbb{N} \to H$, we can represent a tree-like subset of $H$ by its characteristic function, and we shall denote the represented space of tree-like subsets of $H$ by $\treeL$.
Thus, $\beta$ induces a computable partial map from $\tree$ to $\treeL$, which we also denote by $\beta$.

%Moreover, since $e \in \mathrm{succ}_H(d)$ has length $g(n+1)$ and
%it has one more digits than $d$ at the first $(g(n+1) - n)$ copies of bottoms
%of $d \bot^{g(n+1)-g(n)}$, one can define the isomorphism $h_e$
%as $h_e(aw)$ is obtained by setting the $w_2$-th bottom of $d \bot^{g(n+1)-g(n)}$
%from the left to $a$.  Here, $a \in \{0,1\}$ and $w_2$ is the number $w$ specifies
%with the binary notation.

\begin{example}

  For the case of the Gray subbase $G$ in Example \ref{ex:gray}, $g(n) = n+1$ satisfies Lemma \ref{lemma:H}
and $f(n) = 2n$ satisfies Proposition \ref{prop:tree-like}.
Therefore,
$|\bo{H}{n}| = (n+1)2^n$ and $|\mathrm{succ}_H(e)| = 4$ for every  $e \in H$.
$\bo{H}{1}$ contains $\bot 0$ in addition to the three elements of
$\bo{(\widehat{K}_G)}{1}$.
$\beta$ injectively maps $\2^{2}$ to $\bo{H}{1}$, and
maps $2^{4}$ elements of $\2^{4}$ to $2^{3}$ elements of $\bo{H}{2}$.
\end{example}

\begin{lemma}\label{lemma:lub}
\begin{enumerate}
\item   $(H, \preceq)$ is a partially ordered set.

\item   $(H, \preceq)$ is consistently complete.  That is, every pair $(d, e)$ of consistent elements have
  their least upper bound $d \sqcup e$ in $H$.  We have $S(d \sqcup e) = S(d) \cap S(e)$.

\item  $S(d) = \cup \{S(e) \mid e \in \mathrm{succ}_H(d) \}$ for every $d \in H$.
  More generally,
$S(d) = \cup \{S(e) \mid e \in \bo{H}{n}, d \prec e \}$ for every $d \in H$ and $n >  \mathrm{level}(d)$.

\item  Every subset $A$ of $H$ has a greatest lower bound $\sqcap A$.
\end{enumerate}
\begin{proof}
\begin{enumerate}
\item  For $d, e \in H$,  if  $d\preceq e$ and $e\preceq d$,
then $d$ and $e$ have the same level $n$ and therefore
have the same length $g(n)$.

\item In $\tau(\T^*)$, if $d \comp e$, then their least upper bound $d \sqcup e$ exists.   Therefore, for $d, e \in H$,
   if $|\tau(d)| \leq |\tau(e)|$, then
   $|\tau(d) \sqcup \tau(e)| = |\tau(e)| \leq |e| = g(\mathrm{level}(e)) \leq g(\mathrm{level}(d \sqcup e))$.
   Thus, least upper bound $d \sqcup e$ of $d, e \in H$ is obtained by adding some
   $\bot$ at the end of $\tau(d) \sqcup \tau(e)$.

\item For $d \in \bo{H}{n}$,
$S(d) \supseteq \cup \{S(e) \mid e \in \mathrm{succ}_H(d) \}$ because $S(d) \supseteq S(e)$ for each $e \in \mathrm{succ}_H(d)$.
$S(d) \subseteq \cup \{S(e) \mid e \in \mathrm{succ}_H(d) \}$
holds if $\tau(d) \in \widehat{K}_{S}$ by Proposition \ref{prop:compact} (3).
It holds for the case $\tau(d) \not \in \widehat{K}_{S}$ because
$\exbarS(e) $ is empty for $e$ such that $d \prec^1 e$ and $|d| < |e|$.

\item From (2), every bounded subset of $H$ has a least upper bound.
   Therefore, the set of lower bounds of $A$, which is bounded by an element of $A$, has a least upper bound, which is a greatest lower bound of $A$.
 \end{enumerate}
\end{proof}
\end{lemma}

Note that Lemma \ref{lemma:lub} (2) does not imply that
$S(d) \cap S(e) \ne \emptyset$ if $d \comp e$,
because
$S(e)$ can be empty for $e \in H$ though $S(e)$ is not empty for $e \in \widehat{K}_{S}$.

\subsection{Main theorem}

Now we obtain results corresponding to Corollary \ref{corr:nicefunctionscantor}  for CCCMS.
% and Theorem \ref{theo:matchingcantornew}

Let $X$ be a CCCMS.
By Theorem \ref{theo:dyadicexists}, $X$ admits a proper computable dyadic subbase $S$.  Let $H$, $f$, $\beta$
and $\treeL$ be as defined in the previous subsection for $X$ and $S$.
We define a subset $\mathcal{N} \subseteq \treeL$ as
%\marginpar{\color{red}\scriptsize removed $(E)_S = [E]_S$ from $\mathcal N$.}
\[\mathcal{N} = \{E \in \treeL \mid L(E) \subseteq L(\widehat{K}_S)\}\,.\]
% If $E$ is a path with the limit $p$, then $E \in \mathcal{N}$ means that
% $p \in L(\widehat{K}_S)$ and thus $\bar{S}(p)$ is a one-point set $\{\rho(p)\}$
% by Proposition \ref{prop:compact}(4).

\begin{lemma}\label{lem:rhobracket}
  If $E \in {\mathcal N}$, then  $[E]_S = \{\rho(p) \mid p \in L(E)\}$.
\end{lemma}
\begin{proof}
  By Proposition \ref{prop:bracket},
  \ref{prop:compact}(4). and \ref{prop:rho}(3).
\end{proof}

\begin{proposition}\label{prop:nicefunctionsgeneral}
Let $\X$ be a CCCMS.
There is a surjection $\tilde{t}_{\X} : \treeL \to {\A}(\X)$ and a multivalued map $\tilde{s}_{\X}: {\A}(\X) \rightrightarrows \treeL$ such that
\begin{enumerate}
\item $\tilde{s}_{\X}$ and $\tilde{t}_{\X}$ are computable.
\item $\tilde{t}_{\X} \circ \tilde{s}_{\X} = \id_{\A(\X)}$.
\item For each $A \in {\A}(\X)$, $\tilde{s}_{\X}(A) \in \mathcal{N}$.
\end{enumerate}

\begin{proof}
For a finite subset $C \subset H$ and $A \subseteq \X$, we simply say that $C$ is a covering of $A$ if
$S(C) = \{S(e) \mid e \in C\}$ is a covering of $A$.

For $E \in \treeL$, we define $\tilde{t}_\X(E) = [E]_S =
\bigcap_{n \in \mathbb{N}} \bigcup_{e \in \bo{E}{n}} \overline{S}(e)$.
As $\A(\X)$ is effectively closed under finite unions and countable intersections,
and since $e \notin E$ is recognizable, this does define a computable function.

%We show how to obtain a computable multivalued inverse $\tilde{s}_\X$ to $\tilde{t}_\X$.
For $A \in \A(\X)$, we define $\tilde{s}_\X(A)$ as follows.
We inductively define a computable strictly monotonic function $h : \N \to \N$ and
a sequence $E_0,E_1,\ldots$ of subsets $E_n \subseteq \bo{H}{h(n)}$ such that
$\forall d \in E_{n+1} \exists  e \in E_{n} \ d \succ e $ and that
$E_n$ is a covering of $A$.
Then, we define a tree-like subset $E \subseteq H$ such that
$\bo{E}{h(n)} = E_n$ by defining for $h(n) < m < h(n+1)$,
$\bo{E}{m} =
\{c \in \bo{H}{m} \mid \exists d \in E_{n+1} \exists  e \in E_{n} \ e \prec c \prec d\}$. Finally, we set $\tilde{s}_\X(A) = E$.

%We maintain throughout the process that $E_n$ is a covering of $A$.
%$A \subseteq \bigcup_{w \in T_i} S(\beta(w))$

We start with $h(0) = 0$ and $E_0 = \{\varepsilon\}$.
We search for (in a dove-tailing way) finite coverings $C \subset H$ of $A$.
There are infinitely many such coverings, and
as $\X$ is computably compact, we will find each such covering eventually.

If we do find a new covering $C$ at stage $n+1$, then we consider
$$E_n \oplus C := \{ d \sqcup e \mid d \in E_n \wedge e \in C \wedge d \comp e\}.$$  Here,  $d \sqcup e$ exists because of Lemma \ref{lemma:lub} (2).
Because $S(d \sqcup e) = S(d) \cap S(e)$,
$(\cup S(E_n)) \cap (\cup S(C)) = \cup S(E_n \oplus C)$.  That is,  $E_n \oplus C$ is
a covering of $A$, which is a refinement of both coverings $E_n$ and $C$.
%$\left ( \bigcup_{e \in \beta(T_n)} S(e)  \right ) \cap \left (\bigcup_{e \in C} S(e) \right ) = \bigcup_{d \in \beta(T_n) \oplus C} S(d)$.

We then choose $h(n+1)$ so that $E_n \oplus C \subseteq \bo{H}{\leq h(n+1)}$ %sufficiently big,
and set
\[
E_{n+1} := \{e \in \bo{H}{h(n+1)} \mid (\exists d \in E_n \oplus C \ d \preceq e) \land S(e) \ne \emptyset\}.
  \]
Note that $S(e) \ne \emptyset$ is decidable  by Proposition \ref{prop:decidable}.
Because of Lemma \ref{lemma:lub} (3), for $d \in E_n \oplus C$,
we have $S(d) = \cup\{S(e) \mid e \in \bo{H}{h(n+1)} \land d \preceq e\}$.
Therefore,
we have $\cup S(E_n \oplus C) = \cup S(E_{n+1})$.
%$\bigcup_{w \in T_{n+1}} S(\beta(w))  = \bigcup_{d \in \beta(T_n) \oplus C} S(d)$.
Thus, $E_{n+1}$ is also
a covering of $A$, which is a refinement of both coverings $E_n$ and $C$.

This procedure ensures that
$\tilde{t}_\X(\tilde{s}_\X(A)) =
\tilde{t}_\X(E) = \bigcap_{n \in \mathbb{N}} \bigcup_{e \in \bo{E}{n}} \overline{S}(e)$
is equal to the intersection of all finite basic open coverings of $A$, which is equal to $A$ in a compact Hausdorff space.  Therefore, we have $\tilde{t}_\X(\tilde{s}_\X(A)) = A$.
%this in turn implies $\tilde{t}_\X(\tilde{s}_\X(A)) = A$.

% We show $[E]_S = (E)_S$ for $E = \tilde{s}_{\X}(A)$.
% By (2), we have $[E]_S = A$.  On the other hand,
% $(E)_S = \cap_k S(\bo{E}{n})$ and $\bo{E}{n}$ is a covering of $A$.
% Therefore, $A \subseteq (E)_S \subseteq [E]_S = A$.

Finally, we show that $L(\tilde{s}_{\X}(A)) \subseteq L(\widehat{K}_S)$.
Let $e_0 \prec e_1 \prec \ldots$ be an infinite sequence such that
$e_n \in E_n$ and let $p \in L(\widehat{K}_S)$ be its limit.
We first show that $|\bar{S}(p)| = 1$.
We have $\bar{S}(p) = \cap_{n \in \N} \bar{S}(e_n)$ by the definition of $\bar{S}$.  Therefore, if $\bar{S}(p) = \emptyset$, then $\bar{S}(e_n) = \emptyset$ for some $e_n$.   It contradicts the construction because $S(e_n) \ne \emptyset$.   Suppose that $|\bar{S}(p)| > 1$ and $x \ne y \in \bar{S}(p)$.
Since $X$ is computably regular by Theorem \ref{theorem:coregular} and computably compact,
there is a finite covering $C$ that does not contain an element $e$ such that $\{x, y\} \subset  \bar{S}(e)$.
If such a covering is found at stage $n$,  then,  no element of $\bar{S}(E_n)$ contains both $x$ and $y$.
It contradicts to $\bar{S}(e_n) \supseteq \bar{S}(p) \supseteq \{x, y\}$.  Thus, we have
$|\bar{S}(p)| = 1$ and suppose that $\bar{S}(p) = \{x\}$.  We show $\varphi_S(x) \preceq p$.
Suppose that  $x \in S_{i, a}$ for $i \in \N$ and $a \in \{0, 1\}$. Then,
since $X$ is computably regular and computably compact, there is a finite covering $C$ of $X \setminus S_{i, a}$ such that $\bar{S}(e) \not \ni x$ for every $e \in C$.
We can assume that $e(i) \ne a$ for every $e \in C$, because
$e(i) = a$ means $S(e) \subseteq S_{i, a}$ and  we can remove such elements from $C$ to have a covering of
$X \setminus S_{i, a}$.
Then, $C \cup \{\bot^i a\}$ is a covering of $X$ and therefore is also a covering of $A$.
Suppose that such a covering is found at stage $n$.  Then, for every element $e$ of $E_n$,
$x \in \bar{S}(e)$ implies $e = \bot^ia$.  Since $x \in \bar{S}(e_n)$, we have $e_n = \bot^i a$ and
we have $p(i) = a$ because $p \succ e_n$.
Since it holds for every $(i, a)$ such that $x \in S_{i, a}$, we have $\varphi_S(x) \preceq p$.
\end{proof}
\end{proposition}

\begin{theorem}
\label{theo:hereditary}%\marginpar{\color{red}\scriptsize I changed this proof a bit!}
Let $\X$ be a CCCMS.
There is a hereditary representation of $\mathcal{K}(\X)$.
%$t_\X:\subseteq \T^\omega \to
\begin{proof}
  First, we define a computable function $\gamma$ from $\treeL$ to $\tree$, which is a right inverse of $\beta:\tree \to \treeL$.
  For  $E \in \treeL$, we define $T_n  = \beta^{-1}(\bo{E}{n}) \subseteq \2^{f(n)}$.
% inductively as follows.
% \begin{align*}
%   T_0 &= \{\epsilon\} ,\\
%   T_{n+1} &= \uparrow\! T_n \cap \beta^{-1}(\bo{E}{n+1}).
% \end{align*}
%$T_n$ satisfies $\beta(T_n) = \bo{E}{n}$
%by Proposition \ref{prop:tree-like},  and
We have $\forall w \in T_{n+1} \exists v \in T_n v \sqsubseteq w$ and
therefore it induces a tree $T$ such that $T \cap \{0,1\}^{f(n)} = T_n$.  We define $\gamma(E) = T$.
We have $\beta \circ \gamma = \mathrm{id}_{\treeL}$.% because $\beta(T_n) = \bo{E}{n}$.

With the maps $\tilde{t}_\X :\subseteq \treeL \to \A(\X)$ and $\tilde{s}_{\X}: {\A}(\X) \rightrightarrows \treeL$
we defined in Proposition \ref{prop:nicefunctionsgeneral},
we define $t_\X :\subseteq \T^\omega \to \A(\X)$ as
$\tilde{t}_\X \circ \beta \circ \textrm{Prune}^{-1} \circ \delta_\pruned$.
Note that
$\tilde{t}_\X(\beta(T)) = \tilde{t}_\X(\beta(T'))$ if $\textrm{Prune}(T) = \textrm{Prune}(T')$ as we observed
after Proposition \ref{prop:tree-like}.  Therefore,
the value of $t_\X$ is determined not depending on which value of $\textrm{Prune}^{-1}$
is used.  Let $s_\X : \A(\X) \mto \T^\omega$ be its multi-valued inverse
computed by the realizer of $\textrm{Prune} \circ \gamma \circ \tilde{s}_\X$.
%Here, $\T^\omega$ is considered as the $\T^\omega$-represented space $(\T^\omega, \id)$ and
%$\delta_\pruned: \T^\omega \to \pruned$ and $\delta_\pruned^{-1}: \pruned \mto \T^\omega$ are
%realized by $\id_{\T^\omega}$.
We have $t_\X \circ s_\X = \textrm{id}_{\A(X)}$.
We define the representation $\psi_\X : \T^\omega \to \A(\X)$ by restricting
the domain of $t_\X$ to the $\delta'$-names of $(\textrm{Prune}\circ \gamma)(\mathcal{N})$.
Note that since $\tilde{s}_\X$ is a map to $\mathcal{N}$
by Proposition \ref{prop:nicefunctionsgeneral}(3),
$s_\X$ is a map to the domain of $\psi_\X$, and therefore $\psi_\X \circ s_\X = \textrm{id}_{\A(X)}$.

%$\{p \mid (\beta \circ \textrm{Prune}^{-1} \circ \delta_\pruned)(p) \in {\cal N}\}$.
%Note again that it does not depend on the choice  of $\textrm{Prune}^{-1}$ is used because
%whether $E \in {\cal N}$ is determined by $L(E)$.

%  The representation $\delta_\X$ is defined as
% $\delta_\X(w) = \rho_S \circ \lim \circ \beta \circ \delta_\Cantor(w)$
% with the domain $(\lim \circ \beta \circ \delta_\Cantor)^{-1} (L(\widehat{K}_S))$.
% %$(\beta \circ \delta_\Cantor)^{-1}(\mathcal{N})$.
% Here,
% $\delta_\Cantor:\subseteq \Cantor \to \Cantor$ is defined in Theorem \ref{theo:matchingcantornew} and
% $\rho_S:L(\widehat{K}_S) \to X$ is defined in Proposition \ref {prop:compact} (4).

% We show that they form a pair of matching representations.
% If $w \in \Cantor$ ranges over sequences
% obtained by replacing all copies of $\bot$ in $p \in \mathrm{dom}(\psi_\X)$ with $0$ or $1$,
% then $\delta_\Cantor(w)$ ranges over all the infinite paths in
% $(Prune^{-1} \circ \delta_\pruned)(p)$, because $\delta_\Cantor$ and $t_\Cantor$ form a pair of matching representations.
% Therefore, $(\beta \circ \delta_\Cantor)(w)$ ranges over all the infinite paths in the tree-like set
% $(\beta \circ Prune^{-1} \circ \delta_\pruned)(p)$.
% Since $\tilde{t}_\X(P) = [P] = \{(\rho_S\circ \lim)(P)\}$ for a path $P \in \mathcal{N}$,
% $\delta_\X(w)$ ranges over all elements of $\psi_\X(p)$.

We show that $\psi_\X$ is hereditary. That is, if
$\psi_\X(p) = A$ and $B$ is a non-empty compact subset of $A$,
then $\exists q \sqsupseteq p\  (\psi_\X(q) = B)$.
Let $E \in \mathcal{N}$ satisfy $A = \tilde{t}_\X(E)$,
$T = \gamma(E)$, and $p$ is a $\delta'$-name of  $\textrm{Prune} (T)$.
As the hereditarity of $t_\Cantor$ shows,
if $T' \subseteq T$, then there exists $q \sqsupseteq p$ such that
$\delta'_\pruned(q) = \textrm{Prune}(T')$.
Since $\gamma : \treeL \to \tree$ is monotonic in addition, we only need to show that
there exists a tree-like subset $F \subseteq E$ such that $F \in \mathcal{N}$ and $B = \tilde{t}_\X(F)$.
%We define $F$ as $F = \{e \in E \mid (e) \cap B \ne \emptyset\}$.
We define $F = \{e \in E \mid \bar{S}(e) \cap B \ne \emptyset\}$.
$F$ is a tree-like subset of $E$ and therefore $L(F) \subseteq L(\widehat{K}_S)$.
We need to show that $[F]_S= B$.
By Lemma \ref{lem:rhobracket}, we show
$\{\rho(p) \mid p \in L(F)\} = B$.
If $x \in B$, then $p = \varphi(x)$ satisfies $x = \rho(p)$ and we have $\bo{p}{\leq n} \in F$ for every $n \in \N$.
Suppose that $p$ is the limit of a path $(e_n)_{n \in N}$ in $F$. Since $\bar{S}(e_n) \cap B \ne \emptyset$ for each $n$,
we have $\bar{S}(p) \cap B \ne \emptyset$ and thus $\rho(p) \in B$ by Proposition \ref{prop:rho}(3) and
Lemma \ref{lem:rhobracket}.
% If $x \in B$, then for each $n \in \N$, $x \in \bar{S}(e_n)$ for some $e_n \in \bo{E}{n}$
% because $B \subseteq A \subseteq \cup \bar{S}(\bo{E}{n})$.
% Therefore, $[F]_S  \supseteq B$.
% Suppose that $y \in [F]_S$.
% Let $F_y = \{e \in F \mid \bar{S}(e) \ni y\}$.
% Since $y \in [F]_S = \cap_n \cup_{e \in \bo{F}{n}}  \bar{S}(e)$, for each $n$,
% $y \in \bar{S}(e)$ for some $e \in \bo{F}{n}$.  That is, $\bo{(F_y)}{n}$ is non-empty.
% Because %$H$ is finite branching and
% $F_y$ is a finite-branching tree-like infinite subset of $H$,
% we can take an infinite path $(e_n)_{n \in \N}$ in $F_y$.
% Since it is an infinite path in $E$ and $E \in \mathcal{N}$,
% its limit $p$ is in $L(\widehat{K}_S)$.  Therefore,
% $\{y\} =  \bar{S}(p)$. At the same time, since
% $\bar{S}(e_n) \cap B \ne \emptyset$ for every $n$, $y \in  B$.
\end{proof}
\end{theorem}

We finally modify $\psi_\X$ to form a faithful representation $\psi'_\X$ of $\K(\X)$.
For this purpose,
we modify $\gamma$ in the proof of Theorem \ref{theo:hereditary}
and construct $\gamma' : \treeL \to \tree$
so that, for a tree-like set $E$ such that $[E]_S$ is a singleton,
$\gamma'(E)$ is an infinite path $T$ such that $\beta(T) \subseteq E$.

\begin{example}
  As the right half of Figure \ref{fig:G} shows, $\widehat{K}_G$ contains a tree-like set $E$ such that
$L(E) = \{1^\omega\}$ (and therefore $[E]_S = \{1/3\}$)  and $|\bo{{E}}{n}| = 2$ for every $n \in \N^+$.
 As the left half of Figure \ref{fig:G} shows, $\widehat{K}_G$ contains a tree-like set $F$ such that
$L(F) = \{\bot 1 0^\omega, 0 1 0^\omega, 1 1 0^\omega \}$ (and therefore $[F]_S = \{0\}$)  and $|\bo{{F}}{n}| = 3$ for every $n \in \N^+$.
$\gamma(E)$ and $\gamma(F)$ are binary trees which contain infinite number of paths
though $[E]_S$ and $[F]_S$ are singletons.
The new function $\gamma'$ we will define chooses one infinite path from these tree-like sets, and converts them to infinite paths in $\{0,1\}^*$.
\end{example}

\begin{lemma}\label{lem:sqcapsingle}
  Let $S$ be a proper computable dyadic subbase of a CCCMS $\X$ and
  let $E \in \mathcal{N}$ be a tree-like set such that $[E]_S$ is a singleton.
  For each $n \in \N$  there is $m \in \N$ such that $\textrm{level}(\sqcap \bo{E}{m}) \geq n$.

  \begin{proof}
    Let $\{x\} = [E]_S$, $e = \bo{\varphi_S(x)}{n}$, $i \in \dom(e)$ and $F_i = \{d \in E \mid \textrm{level}(d) < i\linebreak \lor\ d(i) \ne e(i)\}$.
    For each $i$, $F_i$ is a tree-like subset of $E$.  If $F_i$ is infinite, then  there is an infinite path in $F_i$ because $H$ is finite-branching.
    Therefore, there is $q \in L(F_i)$ such that $q(i) \ne e(i)$ and it contradicts the fact that all the elements of
$L(E)$ are greater than or equal to  $\varphi_S(x)$.
    Therefore, $F_i$ is a finite set and there exists $m_i$ such that $d(i) = e(i)$ for every element of $d \in \bo{E}{m_i}$.   Let $m$ be the maximal of $\{m_i \mid i \in \dom(e)\}$.
    Then, for every  $d \in \bo{E}{m}$, $d \succeq e$.  Therefore, $\textrm{level}(\sqcap \bo{E}{m}) \geq n$.
  \end{proof}
\end{lemma}

This lemma shows that the sequence $(\sqcap \bo{E}{m})_{m \in \N}$ can be extended to an infinite path whose limit $p$ satisfies $\bar{S}(p) = [E]_S$.

\begin{theorem}\label{theo:minimalfaithful}
  Let $\X$ be a CCCMS. There is a faithful $\T^\omega$-representation of $\mathcal{K}(\X)$.
%\marginpar{\color{red}\scriptsize We changed this proof a lot!}
\begin{proof}
  In this proof, we  define another computable right inverse $\gamma': \treeL \to \tree$ of $\beta: \tree \to \treeL$ such that if $|[E]_S| = n$ then $\gamma'(E)$ is a tree with $n$ infinite paths.

%   We define $\cal{M} \subseteq \tree$ as the set of trees $T $ such that
% $T_n = \bo{T}{f(n)}$ satisfies the following conditions for each $n \in \N$.
% Let $E$ be the tree-like set $\beta(T)$ and let $D_0, \ldots, D_{j-1}$ be the
%   division of $\bo{E}{n}$ into equivalence classes with respect to the transitive closure of $\comp$,
%   and $d_0, \ldots, d_{j-1}$ be their greatest lower bounds.  Our requirement is that
%   $T_n$ is divided into  $T_{n,i}\,  (0 \leq i < j)$  such that (1) $\beta(T_{n,i}) = D_i$,
% (2)  the greatest lower bound (i.e., longest common prefix) $w_i$ of $T_{n,i}$ satisfies
%   $\beta(w_i) = d_i$,   and (3) $T_{n,i}  =\, \uparrow\! w_i \cap \beta^{-1}(D_i)$.
%  We define the representation $\psi_\X :\subseteq \T^\omega \to \A(\X)$ as the restriction
%   of $t_\X$ to $\delta'_\pruned$-names of $\textrm{Prune} (\mathcal{M} \cap \beta^{-1}(\mathcal{N}))$.

%   We first show that $\beta: \tree \to \treeL$ maps $\cal{M}$ to $\treeL$, surjectively
% by defining another right inverse
%   $\gamma': \treeL \to \tree$ of $\beta: \tree \to \treeL$
% whose image is $\cal{M}$.

(i) Definition of $\gamma'$.

Suppose that $E \in \treeL$ is given.  We construct $T = \gamma'(E)$ by
inductively defining $T_n \subseteq \{0,1\}^{f(n)}$ for $n \in \N$.
$T_n$ satisfies the following condition.
 Let $D_0, \ldots, D_{j-1}$ be the
division of $\bo{E}{n}$ into equivalence classes with respect to the transitive closure of $\comp$,
and $d_0, \ldots, d_{j-1}$ be their greatest lower bounds.  The condition is that
$T_n$ is divided into  $T_{n,i}\,  (0 \leq i < j)$  such that (1) $\beta(T_{n,i}) = D_i$,
(2)  the greatest lower bound (i.e., longest common prefix) $w_i$ of $T_{n,i}$ satisfies
$\beta(w_i) = d_i$,   and (3) $T_{n,i}  =\, \uparrow\! w_i \cap \beta^{-1}(D_i)$.

% We give some explanation. Let $k$ be the level of $d_i$.
% Since $\beta(T) = E$, there are
% Though there may be many elements $w \in T_{k}$ such that $\beta(w) = d_i$,
% these properties ensure that only $w_i$ belongs to $\textrm{Prune}(T)$ for $T$ the tree defined below. In other words, we choose one element from $\beta^{-1}(d_i)$ that is left in $\textrm{Prune}(T)$
% when reading the $n$-th level elements of $E$.

First, set $T_0 = \{\epsilon\}$.  Suppose that $T_n \subseteq \{0,1\}^{f(n)}$ is defined and it satisfies the above condition.  Let $D_0, \ldots, D_{j-1}$ be the division of
$\bo{E}{n+1}$ into equivalence classes with respect to the transitive closure of $\comp$, and
  $d_0, \ldots, d_{j-1}$ be their greatest lower bounds.  Since $d \comp e$, $d' \prec d$ and
  $e' \prec e$ impliy $d' \comp e'$, for each $D_i$, there is an equivalence class $D'$ of $\bo{E}{n}$
  such that every element of $D_i$ is greater than an element of $D'$.
Let $d'$ be the greatest lower
  bound of $D'$. By induction hypothesis, for the greatest lower bound $w'$ of $T_n \cap \beta^{-1}(D')$, we have $\beta(w') = d'$.  On the other hand, we have $d' \preceq d_i$ because $d'$ is a lower bound
  of $D_i$.  Therefore, we can choose $w_i\sqsupseteq w'$
such that $\beta(w_i) = d_i$ by Proposition \ref{prop:tree-like}.
Now, we select such a $w_i$ for each $i < j$, and
define $T_{n+1, i} = \uparrow\! w_i \cap \beta^{-1}(D_i)$.  Then, we define
$T_{n+1} = \cup_{i < j}  T_{n+1, i}$.

We show that  $\beta(T_{n+1,i}) = D_i$.  For $e \in D_i$, since $d_i \preceq e$ and $\beta(w_i) = d_i$,
there exists $v \in T$ such that $w_i \sqsubseteq v$ and $\beta(v) = e$ again by Proposition \ref{prop:tree-like}.
Since $\beta(v)  \in D_i$, we have $v \in T_{n+1, i}$ by condition (3) of $\gamma'$.  Next, we show that
$\forall w \in T_{n+1} \exists v \in T_n\  v \sqsubseteq w$.    Set $v = w_{\leq f(n)}$.
Suppose that $w \in T_{n+1, i}$.
Since $w'$ is a prefix of $w_i$ and $w_i$ is a prefix of $w$, $w'$ is a prefix of $w$.
Therefore, $w'$ is a prefix of $v$ and thus we have $v \in T_n$.
Thus, $(T_n)_{n \in \N}$ induces a
  tree $T$ such that $T \cap \{0,1\}^{f(n)} = T_n$, and we define $\gamma'(E) = T$.
One can see from this construction that
  $\gamma': \treeL \to \tree$ is a computable map.
We have  $\beta \circ \gamma' = \textrm{id}_{\treeL}$ because $\beta(T_n) = \bo{E}{n}$.

(ii) If $|[E]_S| = n$, then $\gamma'(E)$ is a tree with $n$ infinite paths.

We first study the case that $[E]_S$ is a singleton.
By Lemma \ref{lem:sqcapsingle}, for each $n$, there is $m$ such that the level of $\sqcap \bo{E}{m}$ is greater or equal to $n$.
Among the equivalence classes of $\bo{E}{n}$
with respect to the transitive closure of $\comp$, only one of them, which we call $D_n \subseteq \bo{E}{n}$, contains an element less than $\sqcap \bo{E}{m}$.  It means that
only elements in $D_n$ can be extended to infinite paths.

From the definition of $\gamma'$,  there is $w_n$ such that $\beta(w_n) = \sqcap D_n$.
Note that $w_n$ is the only element in $\beta^{-1}(\sqcap D_n)$ that can be extended to an infinite path in $\gamma'(E)$
beause $T_n \cap \beta^{-1}(D_n) \subseteq \uparrow w_n$ by condition (3) of $\gamma'$.
Therefore, $w = (w_n)_{n \in \N}$ is extended to the only infinite path in $T$.
%Thus, $\textrm{Prune}(\gamma'(E))$ is an infinite path.   Since $w$ is a path,  its $\delta'$-name has the form $\bot v$ for $v \in \2^\omega$.

If $[E]_S$ contains $k$ points, then there exists $N$ such that for each $n \geq N$,
$\bo{E}{n}$ is divided into $k$ equivalence classes  with respect to the transitive closure of $\comp$.  Therefore, $T$ has $k$ infinite paths with a similar argument.
%Therefore,  $\textrm{Prune}(\gamma'(E))$ has $k$ paths, and its
%$\delta'$-name has the form $\bot v$ such that $v \in \T^\omega$  contains $k-1$ copies of $\bot$.

(iii) Definition of $\psi'_\X$.

We define the multi-valued map $s_\X' : \A(\X) \mto \T^\omega$ through  the realizer of $\textrm{Prune} \circ \gamma' \circ \tilde{s}_\X$.  Set
$${\cal O} = \{p \in \T^\omega \mid \mbox{$|t_\X (p)|$ is infinite or
$|t_\X (p)| = n$ and $p$ contains $n$ copies of $\bot$} \}.$$

Suppose that $|A| = n$.  Then, for $E = \tilde{s}_\X(A)$,  $\gamma'(E)$ is a tree with $n$ paths by (ii).  Therefore, $p = s_\X'(A)$ has $n$ copies of $\bot$ and $|t_\X (p)| = n$.   Therefore, $p \in {\cal O}$.
Suppose that $|A|$ is infinite. Since $t_\X ( s_\X'(A) = A$, we have $s_\X'(A) \in {\cal O}$.
Thus, $s_\X' $ is a computable map to ${\cal O}$.
Now, we define the representation $\psi'_\X :\subseteq \T^\omega \to \A(\X)$ as the restriction of $\psi_\X$ to ${\cal O} \cap (\textrm{Prune} \circ \gamma) (\mathcal{N})$.

(iv) $\psi'_\X$ is faithful.

First, we show that
condition (1) and (2) of  Definition \ref{def:faithful} are satisfied
if $A = \psi'_\X(p)$ is a $n$-point set.
(1) is satisfied since $\psi'_\X$ is restricted to  ${\cal O}$.
(2) is satisfied because all the infinite paths of a tree-like set $E \in {\cal N}$ are in ${\cal N}$, and
that $\psi_\Cantor$ is a faithful map.
In order to show that $\psi_\X'$ is hereditary, compared with the proof that
$\psi_\X$ is hereditary in Theorem \ref{theo:hereditary},
we only need to see that if $A = \psi'_\X(p)$ is an infinite set and $B \subset A$ is a $n$-point set, then there is $q \prec p$ such that $B = \psi'_\X(q)$.
It is done by selecting, for each element of $B$,  one infinite path from $\delta_\pruned(p)$.
\end{proof}
\end{theorem}

\section{Applications to finite closed choice}
\label{sec:weihrauch}
As mentioned in the introduction, an initial goal of our explorations was to better understand and to generalize a construction from \cite{paulyleroux}. The construction was used to show that the difficulty of selecting a point from a closed subset of cardinality equal to (at most) $n$ from a computably rich computably compact computable metric space $\mathbf{X}$ does not depend on the choice of $\mathbf{X}$. The original construction crucially depended on knowledge of $n$, and thus did not extend to the task of selecting a point from a finite closed set. Based on Theorem \ref{theo:minimalfaithful}, we can fill this gap.

These considerations belong to the investigation of closed choice principles in the Weihrauch lattice. Closed choice principles are used to calibrate non-computable tasks where (provided with a suitable certificate) incorrect solutions can be effectively rejected, and some guarantees on the set of solutions are given. In our context, these guarantees are that the solutions belong to compact metric spaces, and that each instance has only finitely many correct solutions (plus certificates). For an survey of Weihrauch reducibility, and the definition of all terms left undefined here, we refer to \cite{pauly-handbook}. The important role of close choice principles was noted in \cite{paulybrattka}.

\begin{definition}
For a represented space $\mathbf{X}$, closed choice $\C_\X : \subseteq \mathcal{A}(\X) \mto \X$ is defined via $x \in \C_\X(A)$ iff $x \in A$. By $\C_{\X,\sharp \leq n}$ and $\C_{\X,\sharp=n}$ we denote the restriction of $\C_\X$ to sets of cardinality up to $n$ and equal to $n$ respectively. By $\C_{\X,\sharp<\infty}$ we denote the restriction of $\C_\X$ to finite sets.
\end{definition}

\begin{definition}
\label{def:concretize}
Let $\operatorname{Concretize} : \T^\omega \mto \Cantor$ be defined via $q \in \operatorname{Concretize}(p)$ iff $q \succeq p$. Let $\operatorname{Concretize}_{<\infty}$ be the restriction of $\operatorname{Concretize}$ to $\{p \in \T^\omega \mid |\{n \mid p(n) = \bot\}| < \infty\}$.
\end{definition}

\begin{corollary}
Let $\mathbf{X}$ be a computably rich CCCMS. Then
\[\C_{\mathbf{X},\sharp<\infty} \equivW \C_{\Cantor,\sharp<\infty} \equivW \operatorname{Concretize}_{<\infty}\]
\begin{proof}
\begin{description}
\item[$\C_{\mathbf{X},\sharp<\infty} \leqW \operatorname{Concretize}_{<\infty}$] By using a faithful representation for $\K(\X) \cong \A(\X)$ from Theorem \ref{theo:minimalfaithful}, %\ref{theo:matchinggeneral}
 we can apply $\operatorname{Concretize}_{<\infty}$ to the input for $\C_{\mathbf{X},\sharp<\infty}$ and obtain a singleton subset. By admissibility of $\X$, we can then extract the point from the singleton.
\item[$\operatorname{Concretize}_{<\infty} \leqW \C_{\Cantor,\sharp<\infty}$] Given $p \in \mathbb{T}^\omega$ we can compute $\{q \in \Cantor \mid q \succeq p\} \in \mathcal{A}(\Cantor)$. The claim follows from the definition of $\operatorname{Concretize}_{<\infty}$ in Definition \ref{def:concretize}.
\item[$\C_{\Cantor,\sharp<\infty} \leqW \C_{\mathbf{X},\sharp<\infty}$] As $\X$ is computably rich, by definition it contains a copy of $\Cantor$ as a computable closed subspace.
\end{description}
\end{proof}
\end{corollary}

By $\mathrm{AoUC}_{\X}$ (all-or-unique choice) we denote the restriction of $\C_\X$ to $\{A \in \mathcal{A}(\X) \mid |A| = 1 \vee A = X\}$. The degree $\mathrm{AoUC}_{[0,1]}$ was studied first in \cite{paulyincomputabilitynashequilibria}, where it was found to relate to the complexity of finding Nash equilibria in bimatrix games. Further results on $\mathrm{AoUC}_{\X}$ are available in \cite[Section 16 \& 17]{hoelzl} and \cite{pauly-kihara2-mfcs}.

\begin{proposition}
$\mathrm{AoUC}_{[0,1]} \nleqW \C_{\Cantor,\sharp<\infty}$.
\begin{proof}
Note that $\C_{\Cantor,\sharp<\infty}$ is a cylinder, so if $\mathrm{AoUC}_{[0,1]} \leqW \C_{\Cantor,\sharp<\infty}$, then already $\mathrm{AoUC}_{[0,1]} \leq_{\mathrm{sW}} \C_{\Cantor,\sharp<\infty}$. Assume this would hold. Consider the input $\uint$ to $\mathrm{AoUC}_{[0,1]}$. This will be mapped to some finite set $F \in \mathcal{A}(\Cantor)$. The outer reduction witness $K : \subseteq \Cantor \to \uint$ can be lifted to compact sets. Let $H$ be the inner reduction witness. From $A \in \dom(\mathrm{AoUC}_{[0,1]})$ we can compute $K[H(A)] \in \mathcal{A}(\uint)$, in particular $K[H(\uint)]$. Since $|K[H(\uint)]| \leq |F|$, at some finite time some non-empty open ball $B \in \mathcal{O}(\uint)$ is removed from $K[H(\uint)]$. But at that moment, we can still alter the name for $\uint$ to a name for $\{x\}$ for some $x \in B$, and the reduction cannot adapt. Thus, $\mathrm{AoUC}_{[0,1]} \leqW \C_{\Cantor,\sharp<\infty}$ cannot hold.
\end{proof}
\end{proposition}

By $\cc_{\X}$ we denote the restriction of $\C_\X$ to connected sets. Clearly, for a connected space $\X$ we find that $\mathrm{AoUC}_\X \leqW \cc_\X$. The degrees $\cc_{[0,1]^n}$ are equivalent to Brouwer's fixed point theorem for $[0,1]^n$ as shown in \cite{paulybrattka3}.

\begin{corollary}
$\cc_{\uint} \nleqW \C_{\Cantor,\sharp<\infty}$.
\end{corollary}

\subsection{Generalized register machines}
Building on Tavana and Weihrauch's \cite{tavana}, Neumann and Pauly \cite{paulyneumann} introduced the operator $^\diamond$ on Weihrauch degrees. Roughly spoken, $f^\diamond$ is the universal problem for register machines using computable operations and tests, together with $f$ as primitive operations. Thus, we can view a Weihrauch reduction to $f^\diamond$ as a uniformly computably procedure that solves the problem by making finitely many oracle calls to $f$. This notion was studied as generalized Weihrauch reducibility by Hirschfeldt and Jockusch \cite{hirschfeldt,hirschfeldt2}.

\begin{proposition}
\label{prop:cinftydiamond}
$\C_{\Cantor,\sharp<\infty} \equivW \C_{\Cantor,\sharp<\infty}^\diamond$.
\begin{proof}
Given some input $(M,x)$ to $\C_{\Cantor,\sharp<\infty}^\diamond$, we make an oracle guess $p \in \Cantor$. Then we simulate the generalized register machine $M$. Each time $M$ calls its oracle $\C_{\Cantor,\sharp<\infty}$, we split the oracle into two parts $\langle p',q_i\rangle$ and use $q_i$ as putative output of  $\C_{\Cantor,\sharp<\infty}$, and $p'$ as new oracle. Once $M$ terminates, we check that the current oracle is equal to $0^\omega$. In addition, we check that all outputs of $\C_{\Cantor,\sharp<\infty}$ were guessed correctly.

We need to argue that there are finitely many valid guesses. As each oracle use has only finitely many correct answers, and since every computation of $M$ is finite, the potential computation paths of $M$ form a finitely branching tree without infinite paths, hence a finite tree. Once all oracle calls are taken care of, only $0^\omega$ remains as the unique correct guess.
\end{proof}
\end{proposition}

\begin{proposition}
\label{prop:cleqdiamond}
\[\C_{\Cantor,\sharp \leq 2}^\diamond \equivW \coprod_{n \in \mathbb{N}} \C_{\Cantor,\sharp \leq n} \equivW \C_{\Cantor,\leq 2}^*\]
and
\[\C_{\Cantor,\sharp = 2}^\diamond \equivW \coprod_{n \in \mathbb{N}} \C_{\Cantor,\sharp = n} \equivW \C_{\Cantor,= 2}^*\]
\begin{proof}
In each case, the second equivalence was already shown in \cite{paulyleroux}, and the reduction from the right-most degree to the left-most is trivial. We thus only need to show that $\C_{\Cantor,\sharp \leq 2}^\diamond \leqW \coprod_{n \in \mathbb{N}} \C_{\Cantor,\sharp \leq n}$ and $\C_{\Cantor,\sharp = 2}^\diamond \leqW \coprod_{n \in \mathbb{N}} \C_{\Cantor,\sharp = n}$.

Similar to the argument in Proposition \ref{prop:cinftydiamond}, consider the computation tree of a generalized register machine $M$ making oracle calls to $\C_{\Cantor,\sharp \leq 2}$ (respectively $C_{\Cantor,\sharp = 2}$). We know that the branching factor of the computation tree at the oracle calls is exactly two (respectively at most two). By Weak K\"onig's Lemma, the depth of the computation tree is bounded, and we can in fact effectively compute some upper bound $t$ on its depth. We then use $n := 2^t$ as first component of the input to $\coprod_{n \in \mathbb{N}} \C_{\Cantor,\sharp \leq n}$ (respectively $2^{2^t}$ as input to $\coprod_{n \in \mathbb{N}} \C_{\Cantor,\sharp = n}$).

In the former case, we can simply feed the same set of guesses constructed in the proof of Proposition \ref{prop:cinftydiamond} to $\C_{\Cantor,\sharp \leq 2^t}$, as the upper bound is guaranteed to be correct, and that is all that is required. In the latter case, we keep refining the computation tree. Each time we learn that one of the up to $2^t$ many oracle calls in the computation tree is not actually happening, we duplicate all remaining points in the input set. This ensures that in the end, we produce a set of the correct cardinality.
\end{proof}
\end{proposition}

\begin{corollary}
$\C_{\Cantor,\sharp = 2}^\diamond \leW \C_{\Cantor,\sharp \leq 2}^\diamond \leW \C_{\Cantor,\sharp<\infty}$
\begin{proof}
The first reduction is trivial. That it is strict follows from $\C_{\Cantor, \sharp \leq 2} \nleqW \C_{\Cantor, \sharp =n}$ shown in \cite{paulyleroux} and Proposition \ref{prop:cleqdiamond}. The second reduction follows from Proposition \ref{prop:cinftydiamond}. We can see that it is strict by first observing that $\C_{\Cantor,\sharp<\infty} \leqW \C_{\Cantor,\sharp \leq 2}^\diamond$ would imply $\C_{\Cantor,\sharp<\infty} \leqW \coprod_{n \in \mathbb{N}} \C_{\Cantor,\sharp \leq n}$. As $\C_{\Cantor,\sharp<\infty}$ is a fractal, that in turn implies $\C_{\Cantor,\sharp<\infty} \leqW \C_{\Cantor,\sharp \leq n}$ for some $n \in \mathbb{N}$. But that is a contradiction to $\C_{\Cantor,\sharp \leq n+1} \nleqW \C_{\Cantor,\sharp \leq n}$ established in \cite{paulyleroux}.
\end{proof}
\end{corollary}
\subsection{A digression on and comparison with $\mathrm{Sort}$}
\begin{definition}
Let $\mathrm{Sort}_* : \subseteq \Baire \to \Baire$ be defined via $p \in \dom(\mathrm{Sort}_*)$ iff $\exists k \ |\{n \mid p(n) = k\}| = \infty$, and $\mathrm{Sort}_*(p) = 0^{c_0}1^{c_1}\ldots k^\infty$, where $|\{n \mid p(n) = 0\}| = c_0$, $|\{n \mid p(n) = 1\}| = c_1$, etc, and $k$ is the least witness of $|\{n \mid p(n) = k\}| = \infty$.
\end{definition}

Let $\mathrm{Sort}_k$ denote the restriction of $\mathrm{Sort}_*$ to $\{0,\ldots,k-1\}^\mathbb{N}$. $\mathrm{Sort}_2$ was introduced and studied in \cite{paulyneumann}, and then generalized to $\mathrm{Sort}_k$ in \cite{hoelzl2}.

\begin{proposition}
$\mathrm{Sort}_k \equivW \mathrm{Sort}_2^{k-1}$.
\begin{proof}
$\mathrm{Sort}_k \leqsW \mathrm{Sort}_2^{k-1}$: \quad Given a $\mathrm{Sort}_k$-instance $p$, we compute $k-1$ $\mathrm{Sort}_2$ instances $p_1, \ldots, p_{k-1}$ by letting $p_n$ be the result of replacing each digit less than $n$ in $p$ by $0$, and each digit greater-or-equal to $n$ by $1$. If $q_1,\ldots,q_{k-1}$ are the outputs of $\mathrm{Sort}_2$, we compute a suitable output $q$ of $\mathrm{Sort}_k$ by reading through all $q_i$ in lock-step. By construction, $q_i$ will switch from $0$ to $1$ prior to $q_{i+1}$ (if they switch at all). In $q$, we write $0$s as long as none switches, then $1$s after $q_1$ switched until $q_2$ switches, and so on.

$\mathrm{Sort}_2^{k-1} \leqW \mathrm{Sort}_{k}$: \quad Given $k-1$ $\mathrm{Sort}_2$-instances $p_1, \ldots, p_{k-1}$ we compute a $\mathrm{Sort}_k$-instance $p$. Since adding a digit of value $k-1$ to $p$ has no impact, we do so often enough to ensure an infinite sequence. We add the $\ell$-th digit $(k-1-j)$ to $p$ as soon as we find that $j$-many of the $p_i$ have an $\ell$-th digit $0$.

Let $q = \mathrm{Sort}_k(p)$. Each $0$ in $q$ indicates that all $p_i$ have one more $0$, so we compute the sorted versions of the $p_i$ by first copying the $0$s of $q$ to each $q_i$. If $q$ then changes from $0$ to $1$, this indicates that amongst the $p_i$ exactly one has no further $0$s. We can wait until we find another $0$ in $(k-2)$-many, and then continue the other $q_i$ by $1^\omega$. Then as long as we read $1$ in $q$, each remaining $q_i$ is extending by $0$. Once $q$ switches to $2$, we know than again exactly one $p_i$ has no further $0$s, and we identify the correct one by ing for $0$s in the others. (If $q$ jumps directly from $0$ to $2$, the two steps are just merged into one). We continue like this for the other potential switches in $q$.
\end{proof}
\end{proposition}

\begin{corollary}
\label{corr:sortvariants}
$\mathrm{Sort}_2^* \equivW \mathrm{Sort}_k^* \equivW \coprod_{k \in \mathbb{N}} \mathrm{Sort}_k$.
\end{corollary}

\begin{proposition}
$\mathrm{Sort}_2^* \leW \mathrm{Sort}_* \leW \lim$.
\begin{proof}
By Corollary \ref{corr:sortvariants}, we can show $\coprod_{k \in \mathbb{N}} \mathrm{Sort}_k \leqW \mathrm{Sort}_*$ for the first reduction. The reduction follows since each $\mathrm{Sort}_k$ is a restriction of $\mathrm{Sort}_*$. To show strictness, assume that $\mathrm{Sort}_* \leqW \mathrm{Sort}^*$. As $\mathrm{Sort}_*$ is easily seen to be a fractal, $\sigma$-irreducibility implies that there is some $n \in \mathbb{N}$ with $\mathrm{Sort}_* \leqW \mathrm{Sort}^n$, so in particular, $\mathrm{Sort}^{n+1} \leqW \mathrm{Sort}^n$. But this was shown to be false in \cite{paulyneumann}.

Using $\widehat{\lpo} \equivW \lim$, we can see $\mathrm{Sort}_* \leqW \lim$ by asking all countably many questions of the form \emph{Are there at least $n$ occurrences of $k$ in the input?} (each of them is equivalent to $\lpo$), and then computing the solution to $\mathrm{Sort}_*$ from that in the obvious way. To see that the reduction is strict, note that $\mathrm{Sort}_*$ outputs only computable points, whereas $\lim$ has to output the Halting problem for some computable input.
\end{proof}
\end{proposition}

\begin{proposition}
$\C_\mathbb{N} \star \mathrm{Sort}_* \equivW \Pi^0_2\mathrm{C}_\mathbb{N}$.
\begin{proof}
\begin{description}
\item[$\C_\mathbb{N} \leqW \Pi^0_2\C_\mathbb{N}$] Straightforward.
\item[$\mathrm{Sort}_* \leqW \Pi^0_2\C_\mathbb{N}$] For any $w \in \mathbb{N}^*$ the set $A_w \subseteq \Baire$ containing all $p$ that have exactly $w(k)$ many $k$'s and infinitely many $|w|$'s is a $\Pi^0_2$-set. Being in the domain of $\mathrm{Sort}_*$ means being contained in a $A_w$. With $\Pi^0_2\C_\mathbb{N}$ we can, given $p \in \dom(\mathrm{Sort}_*)$, find some $w$ such that $p \in A_w$. Then computing $\mathrm{Sort}_*(p)$ is straightforward.
\item[$\Pi^0_2\mathrm{C}_\mathbb{N} \star \Pi^0_2\mathrm{C}_\mathbb{N} \equivW \Pi^0_2\mathrm{C}_\mathbb{N}$] See \cite{paulybrattka5}.
\item[$\Pi^0_2\mathrm{C}_\mathbb{N} \leqW \C_\mathbb{N} \star \mathrm{Sort}_*$] We can view $\Pi^0_2\C_\mathbb{N}$ as the following task: \emph{Given a sequence $(p_i)_{i \in \mathbb{N}}$ with $p_i \in \Cantor$ such that some $p_n$ contains infinitely many $1$s, find such an $n$.} From such an input to $\Pi^0_2\mathrm{C}_\mathbb{N}$ we compute some $p \in \Baire$ by listing $n$ in $p$ whenever we find another $1$ in $p_n$. The promise on the input to $\Pi^0_2\mathrm{C}_\mathbb{N}$ ensures that $p \in \dom(\mathrm{Sort}_*)$. Now $\mathrm{Sort}_*(p)$ is an eventually constant sequence, and we can use $\C_\mathbb{N}$ to find the last index at which $\mathrm{Sort}_*(p)$ changes. With that index, we can find which digit is repeated infinitely often in $\mathrm{Sort}_*(p)$, and this digit constitutes a valid output to $\Pi^0_2\mathrm{C}_\mathbb{N}$.
    \end{description}
\end{proof}
\end{proposition}

\begin{proposition}
$\C_{\Cantor,\sharp<\infty} \leW \mathrm{Sort}_*$.
\begin{proof}
For each $k \in \mathbb{N}$, let $\langle w_n^{0,k},w_n^{1,k},\ldots,w_n^{k}\rangle$ be an effective enumeration of all prefix-independent $k+1$-tuples of finite words. Given $A \in \dom(\C_{\Cantor,\sharp<\infty})$, we will write the $N$-th $k$ to the input to $\mathrm{Sort}_*$ as soon as we have found for each $n \leq N$ some $w_n^{j,k}$ such that $w_n^{j,k} \cap A = \emptyset$ (this condition is semidecidable). If $|A| = K$, then for each $k \leq K$ we will eventually reach some $k$-tuple $\langle w_n^{0,k},w_n^{1,k},\ldots,w_n^{k}\rangle$ such that each $w_n^{j,k} \cap A$ contains a point, whereas each $K + 1$-tuple will eventually be enumerated. In particular, we do indeed produce a valid input to $\mathrm{Sort}_*$.

As $|A| > 0$, we know that the output $p$ of $\mathrm{Sort}_*$ is not $0^\omega$. Thus, we can compute the maximal $c_0 \in \mathbb{N}$ such that $0^{c_0}$ is a prefix of $p$. Then we know that $w_{c_0+1}^{0,0} \cap A \neq \emptyset$. We write $w_{c_0+1}^{0,0}$, and attempt to compute the unique $x \in w_{c_0+1}^{0,0} \cap A$, writing any confirmed prefix to the output. Simultaneously we search for the maximal $c_1$ such that $0^{c_0}1^{c_1}$ is a prefix of $p$.

If $|A| = 1$, then the process will proceed to fully write a correct solution. If $|A| > 1$, we will find a maximal $c_1$. Let $q_{\leq i_0}$ be the prefix of the output written so far (we know this to be correct). Then we know that $w_{c_1+1}^{0,1} \cap A \neq \emptyset$ and $w_{c_1+1}^{1,1} \cap A \neq \emptyset$. If $p_{\leq i_0}$ is compatible with one of $w_{c_1+1}^{0,1}$ and $w_{c_1+1}^{1,1}$, then we attempt to compute the unique extension of $p_{\leq i_0}$ in $w_{c_1+1}^{0,1} \cap A$ respectively $w_{c_1+1}^{1,1} \cap A$, while searching for a maximal $c_2$ such that $0^{c_0}1^{c_1}2^{c_2}$ is a prefix of $p$. If the process continues for ever, we correctly write a solution. Otherwise, some maximal $c_2$ must exist. If $p_{\leq i_0}$ is compatible with neither of $w_{c_1+1}^{0,1}$ and $w_{c_1+1}^{1,1}$, then $|A| \geq 3$, and we know that a maximal $c_2$ exist, which we search for straight-away.

We continue this process. At the latest once we have found the maximal $c_{|A|-1}$, we will attempting to compute the unique point in a singleton, and thus the computation will succeed.

That the reduction is strict follows from $\C_\mathbb{N} \leqW \mathrm{Sort}_2 \leqW \mathrm{Sort}_*$, $\C_{\Cantor,\sharp<\infty} \leqW \C_\Cantor$ and $\C_\mathbb{N} \nleqW \C_\Cantor$.
\end{proof}
\end{proposition}

\section*{Acknowledgements}
We are grateful to Matthew de Brecht for helpful discussions.

 \begin{minipage}{0.1\textwidth}\includegraphics[width=\textwidth]{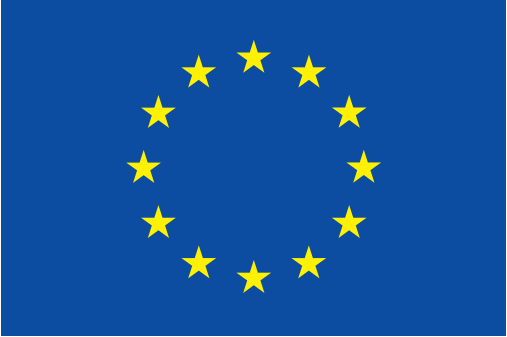}\end{minipage} \begin{minipage}{0.9\textwidth} This project has received funding from the European Union's Horizon 2020 research and innovation programme under the Marie Sklodowska-Curie grant agreement No 731143, \emph{Computing with Infinite Data}.\end{minipage}

 This work also benefited from the Marie Curie International Research Staff Exchange Scheme \emph{Computable
Analysis} (PIRSES-GA-2011- 294962).
 The second author was partially supported by the JSPS Core-to-Core Program (A.
Advanced research Networks) and JSPS KAKENHI Grant Number 15K00015.

\bibliographystyle{eptcs}
\bibliography{../spieltheorie}

\end{document}